\def\BState{\State\hskip-\ALG@thistlm}
\newtheorem{definition}{Definition}
\newtheorem{Theorem}{Theorem}
\newtheorem{lemma}{Lemma}
\newcommand{\snr}{\mathrm{SNR}}
\newcommand{\sinr}{\mathrm{SINR}}
\newcommand{\Prob}{\mathrm{Prob}}
\newcommand{\drop}{\mathrm{drop}}
\newcommand{\Var}{\operatorname{Var}}
\newcommand{\tr}{\operatorname{tr}}
\newcommand{\LASC}{LYRRC~}
\newcommand{\LASCNoSpace}{LYRRC}
\newcommand{\removelatexerror}{\let\@latex@error\@gobble}
\begin{document}
\title{
Balancing Queueing and Retransmission: Latency-Optimal Massive MIMO Design
}
\author{
\IEEEauthorblockN{Xu Du, Yin Sun, Ness B. Shroff, Ashutosh Sabharwal}
\thanks{
Xu Du, Ashutosh Sabharwal are with the Department of Electrical and Computer Engineering, Rice University, Houston, TX, 77005 (e-mails: xdurice@gmail.com, ashu@rice.edu).
Yin Sun is with the Department of Electrical and Computer Engineering, Auburn University (email: yzs0078@auburn.edu).
Ness B. Shroff is with the Departments of ECE and CSE at The Ohio State University (email:shroff@ece.rice.edu).
This work has been supported in part by National Science Foundation awards CCF-1813078, CNS-1518916, CNS-1314822, CNS-1618566, CNS-1719371, CNS-1409336, and from the Office of Naval Research award N00014-17-1-2417.}
}

\bibliographystyle{IEEEbib}
\maketitle
\IEEEpeerreviewmaketitle
\begin{abstract}
  One fundamental challenge in 5G URLLC is how to optimize massive MIMO systems for achieving low latency and high reliability. A natural design choice to maximize reliability and minimize retransmission is to select the lowest allowed target error rate. However, the overall latency is the sum of queueing latency and retransmission latency, hence choosing the lowest target error rate does not always minimize the overall latency. In this paper, we minimize the overall latency by jointly designing the target error rate and transmission rate adaptation, which leads to a fundamental tradeoff point between queueing and retransmission latency. This design problem can be formulated as a Markov decision process, which is theoretically optimal, but its complexity is prohibitively high for real-system deployments. We managed to develop a low-complexity closed-form policy named Large-arraY Reliability and Rate Control (LYRRC), which is proven to be asymptotically latency-optimal as the number of antennas increases. In LYRRC,  the transmission rate is twice of the arrival rate, and the target error rate is a function of the antenna number, arrival rate, and channel estimation error. With simulated and measured channels, our evaluations find LYRRC satisfies the latency and reliability requirements of URLLC in all the tested scenarios.
\end{abstract}


\section{Introduction}\label{sec:intro}
Next-generation cellular systems, labeled as 5G, are targeting low latency and ultra-high reliability to support new forms of applications, e.g. mission critical communications. One of the key technologies for 5G will be massive MIMO, where the base-stations will be equipped with tens to hundreds of antennas~\cite{marzetta2010noncooperative,shepard2012argos,larsson2014massive,CaireL_lower_bound}.
In this paper, we explore how to leverage the large number of spatial degrees of freedom to minimize latency while ensuring high reliability.

Current cellular system design follows a layered approach. The queueing latency\footnote{
In this paper, we use queueing latency to represent the waiting time that packets spend in the MAC-layer queue.
And overall latency denotes the total latency caused by retransmission and waiting at the MAC-layer queue.
}
is managed at MAC and higher layers, while the target (block) error rate\footnote{
In this paper, we use the target error rate when emphasizing the design of transmission control.
And we use block error rate when emphasizing the probability of decoding error under a given transmission control.}
is managed separately by the physical layer to maximize the physical layer throughput.
For example, the transmission rate (usually referred to as modulation and coding scheme~\cite{3gpp.36.213}) is often adapted to meet a fixed target error rate of around $10$\%.
This decoupled design is shown to be nearly throughput optimal~\cite{wu2011coding} for single-antenna systems. However, such a decoupled design may not achieve low latency.

As 5G pushes to low latency (10-100$\times$ lower than the LTE system~\cite{LTE_measurment_2017}) and ultra-high reliability, it is of paramount importance to control the latency and service unreliability caused by retransmissions.
The Ultra-Reliable Low-Latency Communication (URLLC) has a reliability requirement of $99.9999$\%~\cite{7980747}, i.e., the probability of packet successful delivery within $4$ round of transmissions ($0.25$ ms$/$5G frame) should be higher than $99.9999$\%.
To satisfy such reliability requirement, the target error rate cannot exceed $3.16$\%.
For a given set of possible target error rates, it might be natural to choose the lowest one, which leads to the highest link reliability and shortest retransmission latency.
However, since the overall latency is the sum of latency due to queueing and due to retransmissions, a very small target error rate might result in long queueing latency and does not always minimize the overall latency.
In this paper, we achieve reliability guaranteed latency minimization by finding the target error rate
and the transmission rate adaptation that jointly minimize the overall latency.

\begin{figure*}[htbp]
  \centering
  \includegraphics[width=0.35\textwidth]{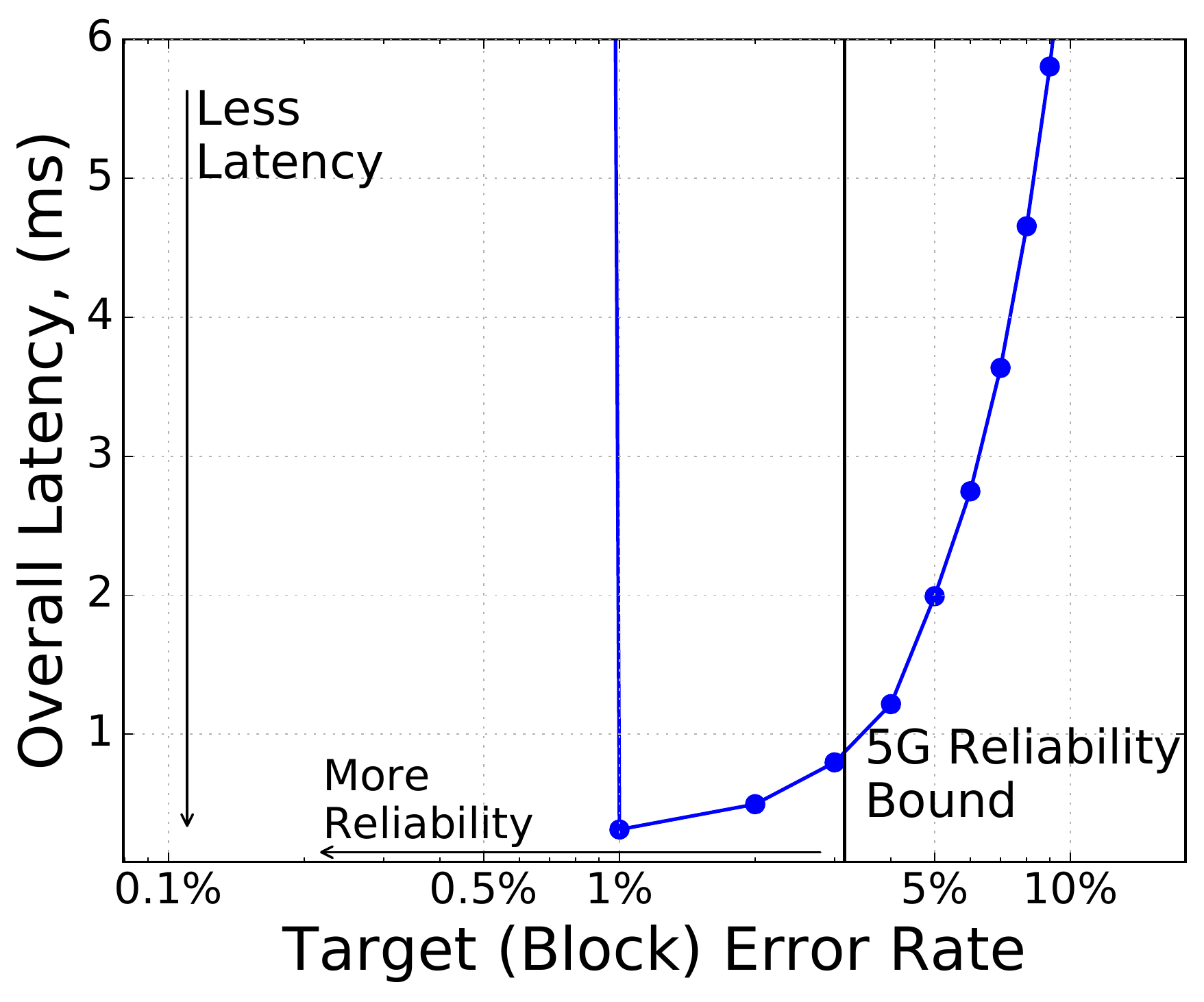}
  \caption{
An example illustrating the overall latency for different target error rates, where the transmission rate has been optimized for each given target error rate. A massive MIMO uplink system with $4$ single-antenna users and $32$ base-station antennas is considered. The channel traces are measured in an over-the-air channel on the Rice Argos platform and the base-station estimates the channel based on $8$ pilot symbols per user.
Please find the evaluation details in Section~\ref{sec:numerical}.
  }
  \label{fig:tradeoff}
\end{figure*}

While it is widely known that the target error rate reduces with a higher transmission power or a lower transmission rate, the relationship between the target error rate and overall latency is more complex.
There is a tradeoff between retransmission latency and queueing latency, both of which are impacted by the target error rate:
On the one hand, the retransmission latency reduces as the target error rate reduces.
On the other hand, if the system is fixed to an extremely low target error rate, few packets can be transmitted in each frame, i.e., the transmission time to send the same amount of packets increases, and packets have to wait for a longer time in the queue.
Therefore, under a given arrival process, the queueing latency increases as the target error rate reduces.
The situation is further complicated by the fact that current mobile users adapt their transmission power, which makes the feasible (transmission rate, target error rate) tuple time-varying.
Fig.~\ref{fig:tradeoff} depicts an example of the minimum overall latency achieved at different target error rates where the transmission rate is optimized for given target error rate; the details on how to optimize the transmission rate will be discussed later in Section~\ref{sec:algo}.
For the specific example in Fig.~\ref{fig:tradeoff}, a target error rate (1\%) smaller than both the LTE target error rate (10\%) and the URLLC reliability requirement (target error rate of 3.16\%) results in the minimum overall latency.
It demonstrates a need for finding an appropriate target error rate that minimizes the overall latency by balancing the queueing latency with the retransmission latency.

In this paper, we model practical massive MIMO systems with retransmissions.
To minimize the overall latency from both queueing and retransmission, we optimize the target error rate and transmission rate adaptation.
The main contributions of this paper are the following:
\begin{itemize}
\item
We formulate a latency minimization problem for massive MIMO systems, in which the target error rate and transmission rate are jointly optimized for minimizing the overall latency, subject to the reliability constraint of URLLC.
The arrival process is a discrete random process that is memoryless.
This optimization problem is cast as a constrained Markov decision process and solved by value iteration.

\item Because Markov decision process does not provide much insight on the optimal control, we develop a deterministic control policy for massive MIMO with a large number of antennas and a constant arrival rate.
We note that there exists an important 5G URLLC type data traffic, e.g., time-sensitive and throughput-hungry virtual reality (VR) service~\cite{7946930}, which has a constant data arrival rate.
This deterministic control policy is named as Large-arraY Reliability and Rate Control (LYRRC), which has a low complexity and is in a closed form: If the packet arrival rate is $\lambda$, the transmission rate of LYRRC is $2\lambda$.
In addition, the target error rate of LYRRC is $F_{\eta}\left[\frac{1}{M^{1-\rho}} \left(1+\frac{K}{\tau}+p_{I}\right)\right]$, where $F_{\eta}$ is the CDF of the effective channel gain (defined later), $M$ is the number of base-station antennas, $K$ is the number of users, $\rho$ is the traffic arrival load over link capacity, $p_{I}$ is the power of the interference from neighboring cells, and $\tau$ is the number of pilots.
LYRRC is proven to be asymptotically optimal as the number of antennas grows to infinity. Furthermore, the total latency achieved by LYRRC can be expressed as a closed-form function of the number of base-station antennas $M$, the number of pilots $\tau$, the number of served users $K$, and $\rho$.
In particular, for $\rho\in\left[0,1\right)$,
we show that the average waiting time diminishes to zero as
$M$
 increases to infinity.

\item To verify \LASCNoSpace's performance in the real world, we measure massive MIMO channels on the $2.4$ GHz with Rice Argos platform~\cite{shepard2012argos}, which consists of a $64$-antenna base-station and four mobile users.
The numerical experiments based on the measured and simulated channels show that \LASC with 5G self-contained frame~\cite{QC.S_CA.Patent,3gpp.36.213} can simultaneously meet the $1$ ms latency and $99.9999$\% reliability criterion.
In the same scenario, the best latency of transmission rate control policies  with a fixed target error rate of $10\%$ is more than $5$ ms.
The evaluations demonstrate that \LASC can provide $400\times$ latency reduction compared to current LTE transmission control, which has a target error rate of $10\%$ and fixed per-frame transmission power control.
Compared to the best queue-length based rate adaptation policy with a fixed target error rate of $10\%$, \LASC achieves a $20 \times$ latency reduction.
\end{itemize}

{\sl Related Work:}
The majority of the massive MIMO literature focuses on the achievable rate maximization, which assumes full-buffer and does not model the upper layer latency from queueing.
Massive MIMO was shown to provide higher spectral efficiency~\cite{marzetta2016fundamentals,bjornson2017massive}, wider coverage~\cite{marzetta2016fundamentals,bjornson2017massive} and easier network interference management~\cite{6815892, ngo2013energy, marzetta2016fundamentals} than traditional MIMO.
This work differs from previous massive MIMO physical layer work in that we provide reliability guaranteed latency-optimal transmission control.
Prior work also optimized the retransmission process, either for throughput~\cite{wu2011coding} or energy efficiency~\cite{su2011optimal} maximization.
Additionally, cross-layer optimization~\cite{lin2006tutorial, 1413227, 1235598, makki2014noisy} have been proposed for latency reduction.
For a point-to-point system, past studies~\cite{berry2002communication, 4294166, 4567575, cao2008power} showed that using the queue-length information for transmission rate control can reduce queueing latency.
Finally, stochastic network calculus~\cite{8362864} is used to capture the latency violation probability of multi-input single-output systems with perfect rate adaptation.
Thus, the perfect rate adaptation of past work implies no decoding error or retransmission latency.

The remainder of this paper is structured as follows. In Section~\ref{sec:SysMdl}, we provide a physical layer abstraction and network model for a single user latency minimization problem.
Section~\ref{sec:algo} provides an algorithm to solve the formulated latency minimization problem. A simple and yet latency-optimal transmission control policy, \LASCNoSpace, is investigated in the large-array regime in Section~\ref{sec:large_M}.
In Section~\ref{sec:MU}, we extend our single-user analytical results to multiuser massive MIMO systems.
We provide numerical results in Section~\ref{sec:numerical} and conclude in Section~\ref{sec:Conclude}.

{\sl Notations:} We use boldface to denote vectors/matrices.
We use $|\cdot|$ to denote the magnitude of a complex number. And the $l_2$ norm of a complex vector is $\left\|\cdot\right\|$.
The complex space is $\mathbb{C}$.
The space of real value is $\mathbb{R}$ whose positive half is denoted as $\mathbb{R}^{+}$. The following notations are used to compare two non-negative real-valued sequences $\left\{a_{n}\right\}$, $\left\{b_{n}\right\}$:
$a_{n}=O\left(b_{n}\right)$ if $\lim_{n \to \infty} \frac{a_{n}}{b_{n}} \leq \infty$;
$a_{n}=o\left(b_{n}\right)$ if $\lim_{n \to \infty} \frac{a_{n}}{b_{n}} =0$.
And $f_{1}\left(M\right) \cong f_2\left(M\right)$ denotes that $\lim_{M \to \infty} \frac{f_1\left(M\right)}{f_2\left(M\right)} = 1$.

\section{System Model and Problem Formulation}\label{sec:SysMdl}
\subsection{System Model}
We consider a massive MIMO uplink system. The single-user case is considered first in Sections II-IV, and is depicted in Fig.~\ref{fig:simo-mdl}. The extension to multi-user systems will be presented later in Section~\ref{sec:MU}. Each user is equipped with a single antenna and the base station has M antennas.
\begin{figure*}[htbp]
  \centering
  \includegraphics[width=0.8\textwidth]{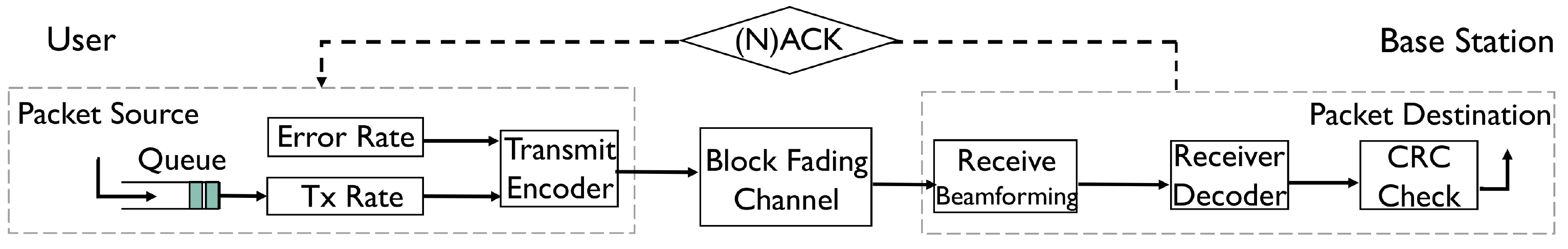}
  \caption{Single-user uplink system consisting of a single antenna user and an $M$-antenna base-station.}
  \label{fig:simo-mdl}
\end{figure*}
Based on the physical layer procedures defined in the first 5G release~\cite{3gpp.36.213}, we consider that the system operates in self-contained frames, as shown in Fig.~\ref{fig:self_frm}.
A self-contained frame consists of both data transmission and an immediate ACK/NACK.
 Without loss of generality, the duration of each frame is of  $1$ unit and Frame $t$ spans the time interval $\left[t, t + 1 \right), \ t \geq 0$.
In each frame, the user first transmits encoded data packets to the base-station. The base-station then feeds back an ACK or NACK to signal whether a decoding error occurred.
The feedback is assumed to be error free.

\begin{figure*}[htbp]
  \centering
  \includegraphics[width=0.75\textwidth]{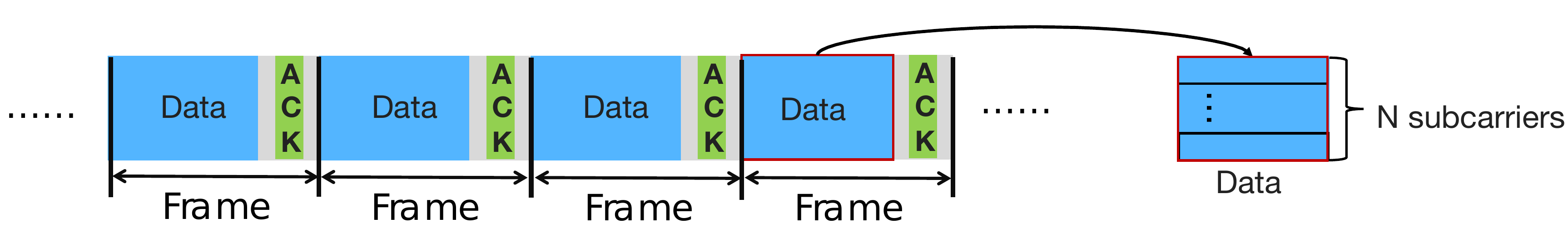}
  \caption{Structure of the self-contained frames. Each self-contained frame consists of uplink data resource blocks (blue), downlink feedback signals (green) and the guard periods (gray). The transmitted data is encoded over $N$ subcarriers with a single code-block.
 }~\label{fig:self_frm}
\end{figure*}

\subsubsection{Physical Layer Model}~\label{subsec:sysmdl-phy}
During the uplink data transmission, the received signal
 by the base-station over the wideband channel
 is
\begin{equation}
\mathbf{y}_{n}= \sqrt{\gamma} \mathbf{h}_{n} x_{n} + \mathbf{z}_{n}, \quad n = 1, ..., N, \label{equ:dl_sig}
\end{equation}
where $n$ is the subcarrier index,
$N$ is the total number of subcarriers,
$x_{n}$ is the transmitted signal,
$\mathbf{z}_{n} \in \mathbb{C}^{M}$ is a zero-mean circularly symmetric complex Gaussian noise vector,
and $0<\gamma\leq 1$ is the large-scale channel gain.
We model the channel fading processes as block Rayleigh fading, where the small-scale fading vector $\mathbf{h}_{t,n}$ maintain the same during each frame and varies independently across frames and subcarriers.
In this paper, we may omit the frame index $t$ in $\mathbf{h}_{t,n}$ when the frame index is clear from the context.
During each frame, the user transmits $\tau$ uplink pilots, each with power $p_{\tau}$. Let $\hat{\mathbf{h}}_{n}$ be the estimated channel vector by the base-station via the MMSE estimator. The estimated channel satisfies that~\cite{marzetta2016fundamentals, bjornson2017massive}
\begin{equation}
\mathbf{h}_{n} = \hat{\mathbf{h}}_{n} + \mathbf{e}_{n},
\end{equation}
where $\mathbf{e}_{n} \in \mathbb{C}^{M}$ is a zero-mean, circularly symmetric complex Gaussian noise vector with variance of $\frac{1}{1+\gamma p_{\tau} \tau}$. After applying conjugate beamforming, the obtained signal is
\begin{align}
\hat{x}_{n} &=  \hat{\mathbf{h}}_{n}^{H} \mathbf{y}_{n}
=   \hat{\mathbf{h}}_{n}^{H} \left[\sqrt{\gamma} \left(\hat{\mathbf{h}}_{n} + \mathbf{e}_{n}\right) x_{n} + \mathbf{z}_{n}\right] \notag\\
&= \sqrt{\gamma}\hat{\mathbf{h}}_{n}^{H} \hat{\mathbf{h}}_{n} x_{n} + \sqrt{\gamma} \hat{\mathbf{h}}_{n}^{H} \mathbf{e}_{n} x_{n} + \hat{\mathbf{h}}_{n}^{H} \mathbf{z}_{n}, \label{equ:y_n_su}
\end{align}
where the three terms on the right hand side represent the desired signal, signal loss from imperfect channel knowledge, and noise, respectively.
The receive $\sinr$ on Subcarrier $n$ is~\cite{ngo2013energy, 6415388}
\begin{equation}
\sinr_{n} = \frac{\gamma p}{ \frac{\gamma p}{1+ \gamma p_{\tau} \tau} + 1} \left\|\hat{\mathbf{h}}_{n}\right\|^{2},~\label{equ:SU_SINR}
\end{equation}
where $p = |x_{n}|^2$ is the power of uplink data transmission.

The user is aware of the large-scale channel gain $\gamma$ and the distribution of the small-scale channel fading via the estimation of a periodic indication signal broadcast by the base-station~\cite{3gpp.36.213}.
During each frame, all uplink packets to be transmitted are encoded in a single code block that spans all $N$ subcarriers.
The block error rate of the uplink transmission $\epsilon$ is a function of the transmission power.
A closed-form characterization of the block error rate appears to be intractable when the code-block length is finite~\cite{5452208}.
Hence, we employ the following block error rate approximation that was developed in~\cite{wu2010performance,wu2011coding,5452208,8640115,8529234}.
Let $L$ be the number of information bits in each packet, and $r_{t}$ is the number of transmitted packets in Frame $t$. We refer to $r_t$ as the {\sl transmission rate}.
The block error rate of a code block with a code-block length $L_{code}$ can be approximated as
\begin{align}
\epsilon
\approx & \Prob\left[\sum^{N}_{n=1}
\log\left(1+\sinr_{n} \right) - \frac{\nu}{
\sqrt{
L_{\mathrm{code}}
}
}\leq r L\right] \label{equ:p_to_epsilon1}\\
\approx &
\Prob\left[\sum^{N}_{n=1}\log\left(\sinr_{n}\right) \leq r L\right],
~\label{equ:p_to_epsilon}
\end{align}
where $\nu$ is the channel dispersion~\cite{5452208,8640115} due to finite block length and is upper bounded by $\log_{2}\left(e\right)$.
For a systems with strong channel coding,~\cite{5452208} shows that~\eqref{equ:p_to_epsilon1} closely captures the block error rate when $L_{\mathrm{code}}>100$.
The approximation in~\eqref{equ:p_to_epsilon} is derived by considering sufficiently large code-block length~\cite{wu2010performance,8529234,wu2011coding} and high $\sinr$ regime~\cite{wu2010performance,wu2011coding}.
Fig.~\ref{fig:outage_channel} provides an illustration of the approximated block error rate in~\eqref{equ:p_to_epsilon}, in which an LDPC-based massive MIMO system is considered and the code-block length is chosen according to DVB-S.2 standard.
Our simulations confirm the conclusions drawn from past works~\cite{wu2010performance,wu2011coding,8529234}.
We hence adopt\footnote{
One can also use the block error rate approximation~\eqref{equ:p_to_epsilon1} which is more accurate in the low SINR and short code-block length regime.
In this case, the effective channel gain in~\eqref{equ:eta_def} and power mapping in~\eqref{equ:p_per_frm} should be modified accordingly.
}~\eqref{equ:p_to_epsilon}
as the block error rate model.

\subsubsection{Buffer Dynamics with Retransmission}\label{subsec:su_buff_dyna}
We assume that there is no packet in the buffer at time $0$.
During each frame, $\lambda$ new packets arrive in the queue\footnote{
Our model and analysis can be directly generalized to the case where the number of new arrival packets across frames follow an independent and identically distribution.
} and each packet contains $L$-bits.
In each frame, the user receives downlink ACK/NACK feedback from the base-station.
Upon ACK, the transmitted packets are removed from the buffer. Upon NACK, the transmitted packets remain at the buffer queue head\footnote{
It is possible to reduce the power of retransmissions via the joint decoding of failed packets and retransmissions as in HARQ. For mathematical tractability, we consider that the receiver discards undecoded packets.}.
We use the indicator function $1_{t}$ to represent decoding success, $1_{t} = 1$ means success and $1_{t} = 0$ otherwise. The distribution of the $1_{t}$ is determined by the chosen target error rate $\epsilon$ as $P\left[1_{t}=1\right] = 1 - \epsilon$ and $P\left[1_{t}=1\right] = \epsilon$.

At time $t$, let $q_{t}$ be the queue-length of the buffer, and $r_{t}$ be the number of packets to be transmitted at Frame $t$ as per the control decision. The queue-length evolves according to
\begin{equation}
q_{t+1} = \min\left[\max\left(q_{t} + \lambda - 1_{t}r_{t}, \lambda\right), B\right],
~\label{equ:buff_evlo}
\end{equation}
where $B$ is the size of the buffer and $r_t$ is the number of transmitted packets in Frame $t$.
If the buffer cannot store all the packets waiting to be transmitted, an overflow event occurs.
The number of dropped packets due to the buffer overflow is given by
\begin{equation}
b_{t} = \max\left(q_{t} + \lambda - 1_{t}r_{t} - B, \lambda - B\right).~\label{equ:overflow}
\end{equation}
The average number of dropped packets due to overflow, measured in packets per frame, is $\lambda_{\drop} = \lim_{T  \to \infty} \sum_{t=0}^{T-1} b_{t}/T$.
When packet overflow happens, the dropped packets induce significant latency to time-sensitive applications.
We assume that each overflowed packet introduces a large latency penalty $D_{\drop}$.
We are interested in minimizing the overall latency (from arrival to successfully delivery).
We consider the stationary policies are complete, i.e., the minimum latency can be achieved by a stationary policy.
Under a stationary policy, the queueing latency of successfully served packets are
$
  \lim_{T \to \infty} \frac{1}{T}\sum_{t=0}^{T-1} \frac{q_{t}}{\lambda - \lambda_{\drop}} \label{equ:queue_cost}
$,
which is derived by using Little's Law~\cite{bertsekas1992data}.
To summarize, if a packet is dropped, its latency is $D_{\drop}$ and if a packet is successfully served (not dropped), its latency is $\lim_{T \to \infty} \frac{1}{T}\sum_{t=0}^{T-1} \frac{q_{t}}{\lambda - \lambda_{\drop}}$.
The average latency is then
\begin{align}
D &= \frac{\lambda-\lambda_{\drop}}{\lambda} \lim_{T \to \infty}  \frac{1}{T} \sum_{t=0}^{T-1} \frac{q_{t}}{\lambda- \lambda_{\drop}} + \frac{\lambda_{\drop}}{\lambda} D_{\drop} \notag\\
&=  \frac{\bar{q}}{\lambda} +  \frac{\lambda_{\drop}}{\lambda} D_{\drop}, ~\label{equ:sys_mdl_d}
\end{align}
where $\frac{\lambda-\lambda_{\drop}}{\lambda}$ is the proportion of successfully served packets and $\bar{q}$ is the average queue-length, i.e., $ \lim_{T \to \infty}  \sum_{t=0}^{T-1} \frac{q_{t}}{T}$.

\begin{figure*}[htbp]
  \centering
  \includegraphics[width=0.4\textwidth]{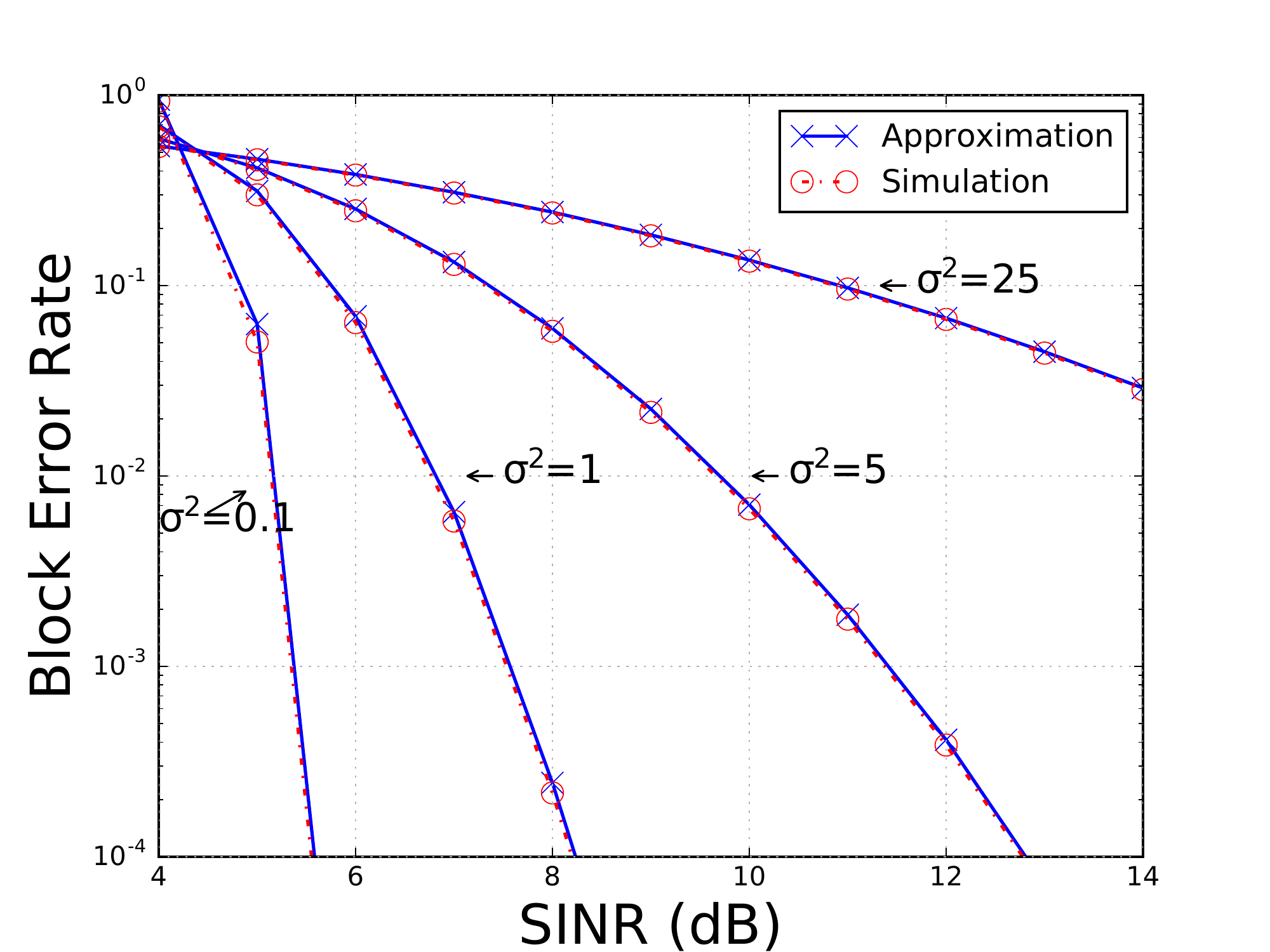}
  \caption{Block error rate of a coded system as a function of $\sinr$ mean with $N=1$.
  In simulation, the channel gain follows the normal distribution with labeled variance.
  The approximations are obtained by~\eqref{equ:p_to_epsilon}.
  And the simulation is done with LDPC code~\cite{MATLAB} and sparse parity-check matrix comes from the DVB-S.2 standard. The transmission is at a rate of $1.5$ bits per symbol ($8$-QAM, $0.5$ code rate).
  }
  \label{fig:outage_channel}
\end{figure*}

\subsubsection{Transmission Power adaptation}
We consider the transmission power of the user to satisfy a long-term power constraint of~$P$.
In Frame $t$, the transmission power is adapted, based on the transmission rate $r_{t}$, and the number of pilots $\tau$, to achieve the target error rate $\epsilon$.
The transmission power is quantified in the sequel:
Substituting~\eqref{equ:SU_SINR} into~\eqref{equ:p_to_epsilon}, the block error rate is approximated as
\begin{equation}
\epsilon \approx \Prob\left[	\left(\prod^{N}_{n=1} \kappa_{n} \right)^{1/N}\leq \frac{\exp\left(r L/N\right)}{M} \left(
\frac{1}{1+\gamma p_{\tau} \tau}
+ \frac{1}{\gamma p}
 \right)
\right].\label{equ:su_outage}
\end{equation}
where $\kappa_{n}$ is the the per-antenna gain of small-scale channel fading, given by
\begin{equation}
\kappa_n \stackrel{\Delta}{=}   \left\|\hat{\mathbf{h}}_{n}\right\|^2/M. \label{equ:kappa_su_def}
\end{equation}
The per-antenna gain $\kappa_{n}$ is the arithmetic mean of the small-scale channel gain across the $M$ antennas because the received signals with different antennas are combined during the linear beamforming.
The left-hand-side of the inequality of~\eqref{equ:su_outage} is determined by the small-scale fading, and the right-hand-side of~\eqref{equ:su_outage} is a constant independent of small-scale fading.
For the ease of subsequent presentation, we define
\begin{equation}
\eta \stackrel{\Delta}{=} \left(\prod^{N}_{n=1} \kappa_{n} \right)^{1/N},
\label{equ:eta_def}
\end{equation}
which is called \emph{effective channel gain}.
The effective channel gain~\eqref{equ:eta_def} is the geometric mean across the $N$ subcarriers because the maximum outage-free rate~\cite{5452208} can be approximated by the logarithmic of the product of the per-subcarrier $\sinr_{n}$.
Let $F_{\eta}\left(x\right) \stackrel{\Delta}{=} \Prob\left(\eta\leq x\right)$ denote the cumulative distribution function (CDF) of the effective channel $\eta$. And the inverse CDF of $\eta$
is~$F_{\eta}^{-1}\left(\epsilon\right) \stackrel{\Delta}{=} \inf \left\{x\in {\mathbb  {R}^{+}}:\epsilon\leq F_{\eta}(x)\right\}$.
Recall that the transmission power is adapted to achieve the target error rate, from~\eqref{equ:su_outage}, we have
\begin{equation}
p\left(r, \epsilon, \tau \right) = \left[\frac{M \gamma F_{\eta}^{-1}\left(\epsilon\right)}{\exp\left(rL/N\right)} - \frac{\gamma}{1 + \gamma p_{\tau} \tau } \right]^{-1},
~\label{equ:p_per_frm}
\end{equation}
where $F_{\eta}^{-1}$ is the inverse CDF of the effective channel gain $\eta$ in~\eqref{equ:eta_def}.
When $\tau$ increases, the base-station has a more accurate channel estimation and the needed transmission power (at the same rate with the same reliability) reduces.
One can observe that the required transmission power increases with the transmission rate $r$ and the packet size $L$, and decreases with the  number of base-station antennas $M$, the number of subcarriers $N$, and the number of pilots $\tau$.

\subsection{Single-user Latency Minimization Problem}
We now formulate the single-user latency minimization problem.
The objective of the joint target error rate and transmission rate control is to minimize the average packet latency under a long-term average power constraint.
The {\sl system state} is the queue-length $q_t$, whose state space is $\mathcal{Q}= \{0, 1, ..., B\}$.
The transmission controller determines the number of transmitted packets $r_t$ at the beginning of each frame based on the queue-length $q_t$, as well as the target error rate $\epsilon$ that remains constant in all frames over time.
Recall that the the transmission rate is the number of transmitted packets $r_{t}$.
We consider the set of stationary policies such that $r_t$ = $\mu (q_t)$, where $\mu:\mathcal{Q}\to \mathbb{R}^{+}$ is a function.
And the target error rate $\epsilon$ is chosen from a finite set $\mathcal{E}$.
Finally, the transmission power $p_{t}$ is adapted based on the designed rate $r_{t}$, target error rate $\epsilon$, and number of pilot $\tau$ as in~\eqref{equ:p_per_frm}.
Both the transmission rate function $\mu$ and the resulting transmission power are independent of the exact small-scale fading $\mathbf{h}_{n}$ as it is unknown to the user.

For any target error rate $\epsilon$ and transmission rate function $\mu$, we assume that the resulted Markov chain of the system states is ergodic, i.e., the unichain condition is satisfied.
The associated unique steady state of the system is denoted as $\pi$.
The latency minimization problem is formulated as:
\begin{subequations}
\begin{align}
  \min_{\substack{
  \epsilon \in \mathcal{E},\\  r_{t} =\mu\left(q_{t}\right) ,\\ \mu:\mathcal{Q}\to \mathbb{R}^{+}}
  }  \quad  & \quad
      D = E\left[\frac{\bar{q}}{\lambda} + \frac{\lambda_{\drop}}{\lambda} D_{\drop}\right] ~\label{equ:latnecy_obj}\\
\text{s.t.} \quad \quad \  & \quad  E\left[\lim_{T\to \infty}\frac{1}{T} \sum_{t=0}^{T-1}
p\left(r_{t}, \epsilon, \tau\right)\right] \leq  P,~\label{equ:power_constraint}\\
& \quad  \epsilon \leq  \epsilon_{\max},~\label{equ:reliable_constraint}\\
& \quad   \text{State Transition Model~(4)-(8)},
\end{align}\label{equ:opt_form}%
\end{subequations}
where $\epsilon_{\max}$ it the maximum allowed target error rate due to reliability requirement. For 5G URLLC, $\epsilon_{\max}=\left(1-99.9999\%\right)^{1/4}=3.16\%$.
The optimal objective value of~\eqref{equ:opt_form} is denoted as $D^{*}$, or $D^{*}\left(M\right)$ when we need to emphasize the dependence on the number of antennas $M$.
Hence, $D^{*}\left(M\right)$ captures the minimum overall latency $D^{*}$ as a function of the number of base-station antennas $M$.


\section{Latency-Optimal Single-User Transmission Control}~\label{sec:algo}
In this section, we first formulate the latency minimization problem~\eqref{equ:opt_form}  as a constrained average cost Markov Decision Process (MDP) and solve it by a proposed algorithm.
The proposed algorithm can also solve the latency-optimal control for general point-to-point MIMO systems by replacing the per-subcarrier $\sinr$ in~\eqref{equ:SU_SINR} with the $\sinr$ of the MIMO system.
The effective channel gain in~\eqref{equ:eta_def} and power mapping in~\eqref{equ:p_per_frm} also should be modified accordingly.

\subsection{Lagrange Duality of the MDP}~\label{subsec:mdp_form}
For a target error rate $\epsilon \in \mathcal{E}$, and a stationary transmission rate adaptation $\mathcal{Q}\to \mathbb{R}^{+} $, based on the definition of average latency~\eqref{equ:sys_mdl_d}, we define the induced latency cost mapping $d$ on each state action pair as
\begin{equation}
d \left(q_{t}, r_{t}, \epsilon\right) = \frac{q_{t}}{\lambda} + \frac{b_{t}}{\lambda} D_{\drop}, \notag
\end{equation}
where $b$ is the number of the dropped packet due to buffer overflow as shown in~\eqref{equ:overflow}.
In Frame $t$, a latency cost and a transmission power cost are incurred.
The average overall latency of the problem in infinite horizon equals
\begin{equation}
D_{\pi} = E_{\pi} \left[\lim_{T \to \infty} \frac{1}{T} \sum_{t=0}^{T-1}d\left(q_{t},r_{t},\epsilon\right) \right]. ~\notag
\end{equation}
Similarly, utilizing the transmission power characterization in~\eqref{equ:p_per_frm}, the average power is
\begin{equation} 
P_{\pi} = E_{\pi} \left[\lim_{T \to \infty} \frac{1}{T} \sum_{t=0}^{T-1}p( r_{t}, \epsilon,\tau)\right]. ~\notag
\end{equation}
Given an average power constraint $P$, the objective of the joint target error rate selection and transmission rate control is restated as a constrained MDP as
\begin{align}
& \text{Minimize $D_{\pi}$}  \notag\\
\text{subject to} \quad &  \quad  P_{\pi} \leq P, \epsilon \leq  \epsilon_{\max}, \notag\\
&  \quad  \text{State Transition Model (4)-(8)}. ~\label{equ:CMDP}
\end{align}
The constrained MDP~\eqref{equ:CMDP} is converted to an unconstrained MDP via Lagrange's relaxation as
\begin{align}
  & \text{Minimize $D_{\pi} + \beta P_{\pi}$}  \notag\\
  \text{subject to} \quad & \quad  \epsilon \leq  \epsilon_{\max}.~\label{equ:UMDP}
\end{align}

For ergodic MDP,~\cite{zero_dual, goyal2008optimal} provide a sufficient condition under which the unconstrained MDP is also optimal for the original constrained problem~\eqref{equ:opt_form}.
For all policies such that $P_{\pi}=P$, the sufficient condition provided by~\cite{zero_dual, goyal2008optimal} is satisfied. Thus, when the constraint is binding, there exists zero-duality gap between original problem~\eqref{equ:opt_form} and the unconstrained MDP~\eqref{equ:UMDP}, i.e., their optimal solution is the same.

We now present the algorithm to solve~\eqref{equ:UMDP} in Section~\ref{subsec:mdp_algo}.
The closed-form solution of~\eqref{equ:UMDP} and the characterization of the array-latency tradeoff $D^{*}\left(M\right)$ are presented in Section~\ref{sec:large_M}.

\subsection{A Value Iteration Based Algorithm}~\label{subsec:mdp_algo}
Problem~\eqref{equ:UMDP} is an MDP with an average cost criterion in infinite horizon.
To find the optimal target error rate, we need to find the optimal transmission rate adaptation and the corresponding achievable latency for each $\epsilon \in \mathcal{E}$ that is smaller than $\epsilon_{\max}$.
Furthermore, for each target error rate $\epsilon$, we can use binary search method to find the smallest $\beta$ that satisfies the long-term power constraint $P$ in~\eqref{equ:UMDP}.
Such $\beta$ corresponds to the latency-optimal solution for~\eqref{equ:CMDP} because that, for each $\epsilon$, the average power is monotonically non-decreasing on $\beta>0$.
Finally, for each $\epsilon$ and $\beta$, we thus find the optimal transmission rate adaptation $\mu^{*}$ by considering
$\alpha$-discounted problem~\cite{bertsekas2007dynamic_bk} of~\eqref{equ:UMDP}.
We now present a solution to each of the discounted problem.
For each system state $q$, define value cost function as
\begin{equation*}
  V_{\alpha} \left(q\right) \stackrel{\Delta}{=} \min_{\mu} E_{\pi} \left\{\sum_{t=0}^{\infty}\alpha^{t}\left[d\left(r_{t}, q_{t}, \epsilon\right) + \beta p\left(r_{t}, \epsilon, \tau\right)\right]\right\},
\end{equation*}
where $\alpha \in \left(0, 1\right)$ is the discount factor. For each $\epsilon$ and $\beta$, we need to find a stationary transmission rate adaptation for all $\alpha$-discounted problem with $\alpha \in \left(0, 1\right)$, i.e., the Blackwell optimal policy. For the considered {\sl finite} state MDP, the Blackwell optimal policy~\cite{bertsekas2007dynamic_bk} exists and is also optimal for the average cost problem~\eqref{equ:UMDP}. The Bellman's equation of the above $\alpha$-discounted problem is then
\begin{align}
V^{*}_{\alpha} \left(q\right) = \min_{\mu}  \Big\{d&(r, q, \epsilon) + \beta p\left(r, \epsilon, \tau \right)  + \notag\\
 \big[&  \left(1-\epsilon\right) V^{*}_{\alpha}\left(\min\left(q + \lambda - r, B\right) \right) + \notag \\
& \epsilon V^{*}_{\alpha}\left(\min\left(q + \lambda, B\right)\right)
\big]\Big\}, ~\label{equ:Bellman}
\end{align}
whose state transition is described by~\eqref{equ:p_to_epsilon},~\eqref{equ:buff_evlo}, and~\eqref{equ:overflow}. Using dynamic programming with value iteration~\cite{bertsekas2007dynamic_bk} over~\eqref{equ:Bellman}, we can solve the $\alpha$-discounted problem. Since the discounted cost $V_{\alpha}$ is bounded,~\cite{bertsekas2007dynamic_bk} shows that solving~\eqref{equ:Bellman} generates the optimal transmission rate control $\mu^{*}$.

We summarize the above steps in Algorithm~\ref{alg:dp},
\begin{figure}[!t]
 \removelatexerror
 \setlength{\belowcaptionskip}{-30pt}
  \begin{algorithm}[H]~\label{alg}
    \caption{ Latency-Optimal Joint Target Error Rate and Transmission Rate Control}\label{alg:dp}
    \SetKwInOut{Input}{Input}
    \SetKwInOut{Output}{Output}
    \Input{Average power constraint $P$, number of antennas $M$, number of subcarriers $N$, distribution of packet arrival $a$, large-scale channel gain $\gamma$, CDF of effective channel gain $\eta$, number of pilots $\tau$, pilots power $p_{\tau}$.}
    \Output{Optimal target error rate $\epsilon^{*}$, optimal transmission rate adaptation $\mu^{*}$, minimum achievable latency $D^{*}$.}
   \For(\quad $\sslash$ \emph{Find minimum latency for each $\epsilon \in \mathcal{E}$}){$\epsilon \in \mathcal{E}$ that $\epsilon \leq \epsilon_{\max}$}
   {
      $\beta_{min} = 0, \ \beta_{max} = z$; \quad $\sslash$ \emph{$z$ is a very large but finite number} \\
      \While ( \quad $\sslash$ \emph{Find smallest $\beta$ that satisfies the average power constraint, $\delta$ is a small constant that controls the algorithm output accuracy}){$\beta_{\min} /\beta_{\max} < 1- \delta $}{
        $\beta \leftarrow \left(\beta_{\max} + \beta_{\min}\right)/2$ \;
        Initialize $V_{\alpha}^{0}\left(q\right)$ for every system state in $\mathcal{Q}$ and $n=1$\;
        Solve for $V_{\alpha}^{1}$ from $V_{\alpha}^{0}$ via value iteration as~\eqref{equ:Bellman}\;
        \While (\quad $\sslash$ \emph{Find optimal $\mu$ for each $\beta$ and $\epsilon$}) {$V_{\alpha}^{n} \neq V_{\alpha}^{n-1}$}{
          Update $V_{\alpha}^{n}$ from $V_{\alpha}^{n-1}$ via value iteration as~\eqref{equ:Bellman}\;
        }
        Compute the corresponding power $P_{\mathrm{tmp}}$\;
        \eIf{$P_{\mathrm{tmp}} > P$} { $\beta_{\min}$ = $\beta$\;}{$\beta_{\max}$ = $\beta$\;}
      }
      Denote the solved transmission rate function as $\mu_{\epsilon}\left(q_{t}\right)$ and the resulted latency as $D_{\epsilon}$.
   }
   {\bf Optimal policy extraction:} $\epsilon^{*}= \arg\min_{\epsilon\in\mathcal{E}, \epsilon\leq\epsilon_{\max}} D_{\epsilon}$,
      $\mu^{*}\left(q_{t}\right) = \mu_{\epsilon^{*}}\left(q_{t}\right)$, and $D^{*}=D_{\epsilon^{*}}$.
  \end{algorithm}
\end{figure}
which solves~\eqref{equ:CMDP} to find the optimal target error rate and transmission rate adaptation.
To provide insights on the structure of optimal transmission controls, we now resent a closed-form characterizations when $M\to\infty$ in Section~\ref{sec:large_M}.

\section{Large-Array Latency-Optimal Control}\label{sec:large_M}
In this section, we derive the latency-optimal control for the single-user problem in~\eqref{equ:opt_form} when the number of base-station antennas $M\to \infty$.
For the single-user system in Rayleigh fading, the per-antenna gain $\kappa_{n}$ in~\eqref{equ:kappa_su_def} satisfies the following~\cite[A.2.4]{marzetta2016fundamentals},\cite{ngo2013energy, bjornson2017massive}.
\begin{itemize}
\item {\sl Mean}: The per-antenna gain mean is a constant that is independent of $M$, i.e.,
\begin{equation}
  E\left[\kappa_{n}\right] = \frac{\tau p_{\tau} \gamma }{\tau p_{\tau} \gamma + 1},\label{equ:admissive_mean}
\end{equation}
\item {\sl Variance}: The per-antenna gain variance is inversely proportional to $M$, i.e.,
\begin{equation}
\Var\left[\kappa_{n}\right]=\frac{1}{M}\left(\frac{\tau p_{\tau} \gamma }{\tau p_{\tau} \gamma + 1}\right)^2. \label{equ:admissive_variance}
\end{equation}
\end{itemize}
In Section~\ref{sec:MU}, we will show that a multiuser massive MIMO channel can be decoupled into parallel single-user channels. For each of the decoupled channels, the per-antenna gain is also of variance that is inversely proportional to $M$.

Based on condition~\eqref{equ:admissive_mean}, the achievable $\sinr$ grows with the number of base-station antennas $M$ linearly.
As the focus of the current section is on the asymptotic analysis with $M \to \infty$, we can view $\log M$ as the link ``capacity''. In the same spirit, we define the system utilization factor to be a constant $\rho\in\left[0,1\right)$ as
\begin{equation}
\rho \stackrel{\Delta}{=} \frac{\lambda L}{N \log M }, \label{equ:rho_def}
\end{equation}
where $\lambda$ is the packet arrival rate, $L$ is the number of bits in each packet, and $N$ is the number of subcarriers. By~\eqref{equ:rho_def}, the packet arrival rate $\lambda$ increases with $M$ and equals $\frac{N \log M}{L \rho}$.
Conceptually, the term $N \log M$ can be viewed as the total ``capacity'' of the wideband link and $\lambda L$ can be viewed as the data load.
Thus, the utilization factor $\rho$ can be interpreted as the ratio between the offered data load and the total link ``capacity''.

We also make the following assumptions for mathematical tractability. We consider an infinite buffer (i.e., $B \rightarrow \infty$), thus no buffer overflow or overflow latency occurs. And the target error rate $\epsilon$ can be chosen from a continuous set $\left(0, 1\right)$.

\subsection{Array-Latency Scaling Lower Bound}~\label{subsec:large_array_lower_bd}
Notice that a trivial lower bound of $D^{*}(M)$ is $1$ frame, which is the first transmission attempt of a packet. This $1$ frame latency lower bound can only be achieved if the target error rate is exactly zero.
We now provide a tighter lower bound of the array-latency curve $D^{*}\left(M\right)$.

\begin{Theorem}[Latency Scaling Lower Bound]~\label{thm:large_array_lower_bd}
 The optimum array-latency curve $D^{*}\left(M\right)$ satisfies
  \begin{equation}
  D^{*}\left(M\right) - 1\geq
  \frac{\epsilon_{o}}{1-\epsilon_{o}},~\label{equ:D_star_M}
  \end{equation}
  where $\epsilon_{o}$ is given by
  \begin{equation}
    \epsilon_{o} = F_{\eta}\left[ \frac{1}{M^{\left(1 - \rho\right)}} \left(\frac{1}{\gamma P} + \frac{1}{\gamma p_{\tau} \tau}\right)\right], \label{equ:eps_oo}
  \end{equation}
  where  $F_{\eta}\left(\cdot\right)$ is the
  CDF of the effective channel gain $\eta$ in~\eqref{equ:eta_def}, $\rho \in \left[0, 1\right)$ is the utilization factor in~\eqref{equ:rho_def}, and $\tau$ is the number of pilots.
\end{Theorem}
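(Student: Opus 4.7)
The plan is to lower-bound $D^*(M)-1$ by combining two independent estimates: an unavoidable retransmission delay expressed in terms of $\epsilon$, and a floor on the smallest $\epsilon$ compatible with the power and stability constraints; the claim then follows by monotonicity of $x/(1-x)$. First, for any stationary policy with target error rate $\epsilon$, the block-Rayleigh fading model makes the per-frame decoding outcomes for the same packet independent Bernoulli$(1-\epsilon)$, so each packet sits at the head of the buffer for a geometric number of frames with mean $1/(1-\epsilon)$. Because each attempt consumes one whole frame, every served packet experiences a sojourn of at least $1/(1-\epsilon)$ frames, giving $D^*(M)-1 \geq \epsilon/(1-\epsilon)$ irrespective of the queueing dynamics.

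Next, I would argue that $\epsilon \geq \epsilon_o$ for every feasible policy. Since $B=\infty$, finiteness of $D^*(M)$ enforces queue stability, so the departure rate $(1-\epsilon)E[r_t]$ must cover the arrival rate $\lambda$; hence $E[r_t] \geq \lambda/(1-\epsilon) \geq \lambda$. Viewing~\eqref{equ:p_per_frm} with $(\epsilon,\tau)$ held fixed, the denominator of $p(r,\epsilon,\tau)$ depends linearly on the convex-decreasing function $e^{-rL/N}$, which makes $p$ itself convex and increasing in $r$. Jensen's inequality and monotonicity therefore give
\[
E[p(r_t,\epsilon,\tau)] \;\geq\; p(E[r_t],\epsilon,\tau) \;\geq\; p(\lambda,\epsilon,\tau),
\]
and the long-term budget~\eqref{equ:power_constraint} then forces $p(\lambda,\epsilon,\tau) \leq P$. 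Inverting~\eqref{equ:p_per_frm} and substituting $e^{\lambda L/N} = M^{\rho}$ from~\eqref{equ:rho_def} produces an explicit lower bound on $F_\eta^{-1}(\epsilon)$; writing the CSI-variance term $1/(1+\gamma p_\tau \tau)$ in the asymptotically equivalent form $1/(\gamma p_\tau \tau)$ used throughout Section~\ref{sec:large_M} yields $F_\eta^{-1}(\epsilon) \geq F_\eta^{-1}(\epsilon_o)$, i.e., $\epsilon \geq \epsilon_o$.

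Combining the two pieces with the strict monotonicity of $x/(1-x)$ on $[0,1)$ gives $D^*(M)-1 \geq \epsilon/(1-\epsilon) \geq \epsilon_o/(1-\epsilon_o)$, which is the claim. The principal technical obstacle is the second step: the convexity of $p(\cdot,\epsilon,\tau)$ in $r$ needs the denominator of~\eqref{equ:p_per_frm} to remain positive over the operating range (so $p$ is bounded and twice differentiable), and the stability floor $E[r_t] \geq \lambda/(1-\epsilon)$ is sharp only because $B=\infty$. The passage from $1/(1+\gamma p_\tau \tau)$ to $1/(\gamma p_\tau \tau)$ must also be handled with care---either as a large-pilot approximation or through a self-consistent fixed-point argument that exploits the tighter floor $E[r_t] \geq \lambda/(1-\epsilon) > \lambda$ rather than the crude $E[r_t] \geq \lambda$---to confirm that the advertised closed form of $\epsilon_o$ is a bona fide lower bound, not merely an asymptotic surrogate.
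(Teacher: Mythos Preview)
Your proposal follows the same route as the paper: lower-bound the latency by the geometric retransmission delay $\epsilon/(1-\epsilon)$, use queue stability together with Jensen's inequality on the convex power map~\eqref{equ:p_per_frm} to force $\epsilon\geq\epsilon_o$, and combine via monotonicity of $x/(1-x)$. The only cosmetic difference is that the paper substitutes $E_\pi[r_t]=\lambda/(1-\epsilon)$ into Jensen and then relaxes $M^{\rho/(1-\epsilon)}$ to $M^{\rho}$, whereas you use $E[r_t]\geq\lambda$ directly to reach the same endpoint; your caution about the $1/(1+\gamma p_\tau\tau)$ versus $1/(\gamma p_\tau\tau)$ replacement is well-placed, as the paper makes that substitution without comment.
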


\begin{proof}
  The main idea is to lower bound the overall latency by the packet retransmission latency, which monotonically increases with the target error rate.
  To complete the proof, we use Jensen's inequality to show that there exists a minimum target error rate $\epsilon_{o}$ such that for any $\epsilon<\epsilon_{o}$ the long-term throughput is smaller than $\lambda$.
  Appendix~\ref{appendix:delay_gap} provides the proof details.
\end{proof}

Theorem~\ref{thm:large_array_lower_bd} presents a latency lower bound.
For any transmission rate adaptation, $\epsilon_{o}$ is the minimum target error rate that leads to a long-term throughput no smaller than $\lambda$.
And if the target error rate is smaller than $\epsilon_{o}$, the queue-length process will not stable.
By the definition of $\eta$~\eqref{equ:eta_def}, the per-antenna mean~\eqref{equ:admissive_mean}, and the per-antenna variance~\eqref{equ:admissive_variance}, Chebyshev's inequality can be used to show that $\epsilon_{o}$ converges (in probability) to $0$ as the number of base-station antenna $M$ increases to infinity.
The channel hardening effect can explain such convergence. The latency lower bound~\eqref{equ:D_star_M} hence converges to $0$ as $M\to\infty$.

If $\tau p_\tau$ is small, the channel estimation error is large.
As a result, both $\epsilon_{o}$ and the latency lower bound are large.
In this case, neither high reliability nor low latency can be met. Hence, sufficiently good channel estimation is necessary for achieving high reliability and low latency.

\subsection{Large-Array Optimal Target Error Rate and Transmission Rate Control}~\label{subsec:opt_ctrl_lrg_array}
In this subsection, we present a simple transmission control policy that meets with the latency lower bound in (20) asymptotically as $M \to \infty$.

\begin{definition}
We define the {\sl Large-arraY Reliability and Rate Control}~(\LASCNoSpace) as
\begin{equation}
  \begin{cases}
    \epsilon^{*} = \epsilon_{o}\\
    \mu^{*}: r_{t}\left(q_{t}\right) = \min\left(q_{t}, \ 2 \lambda  \right)
  \end{cases},~\label{equ:eps_l_mu_l_def}
\end{equation}
where $\epsilon_{o}$ is given by~\eqref{equ:eps_oo}.
\end{definition}
The \LASC policy contains two parts: a target error rate of $\epsilon_{o}$ and an transmission rate control policy $\mu^{*}$.
The transmission rate adaptation $\mu^{*}$ describes a simple thresholding rule: If there are more than $2\lambda$ packets in the buffer queue, i.e., $q \geq 2\lambda$, $2\lambda$ packets will be transmitted. If less than $2\lambda$ packets are currently in the buffer, all packet in the queue will be scheduled for transmission in the frame.
In each frame, based on the transmission rate of $\min\left(q_{t},2\lambda\right)$, the user utilizes power adaptation~\eqref{equ:p_per_frm} to achieve the target error rate target $\epsilon_{o}$.

To evaluate \LASCNoSpace, we now first derive the latency with arbitrary target error rate $\epsilon<\frac{1}{2}$ and transmission rate policy $\mu^{*}$.
We next prove the asymptotic optimality of \LASC\eqref{equ:eps_l_mu_l_def} by comparing the achieved latency to the minimum latency lower bound in Theorem~\ref{thm:large_array_lower_bd}.

\subsubsection{Latency Performance of Transmission Rate adaptation $\mu^{*}$}
\begin{lemma}
Under any target error rate $\epsilon<\frac{1}{2}$ and transmission rate adaptation $r_{t}\left(q_{t}\right)= \min\left(q_{t}, \ 2 \lambda \right)$,
the overall latency is $1 + \frac{\epsilon}{1-2\epsilon}$.~\label{lemma:tx_rate}
\end{lemma}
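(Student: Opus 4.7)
My plan is to characterize the queue-length process as a birth-death Markov chain, compute its stationary distribution in closed form, and then apply Little's Law. First I would observe that under the threshold policy $r_t(q_t)=\min(q_t,2\lambda)$, together with the queue recursion $q_{t+1}=\max(q_t+\lambda-1_t r_t,\lambda)$ (the infinite-buffer assumption of Section~\ref{sec:large_M} lets me drop the $\min(\cdot,B)$ cap), the state space reduces to the lattice $\{\lambda,2\lambda,3\lambda,\ldots\}$. Indeed, starting from $q_0=\lambda$ (or any multiple of $\lambda$), the increments are $\pm\lambda$, and the floor at $\lambda$ is automatically respected because the transmission rate is capped at $2\lambda$.

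Next I would introduce the normalized queue $Q_t=q_t/\lambda\in\{1,2,3,\ldots\}$ and write out the one-step transition probabilities explicitly. For $Q_t=1$: on success (prob.\ $1-\epsilon$) the queue stays at $1$, on failure (prob.\ $\epsilon$) it moves to $2$. For $Q_t=k\geq 2$: success sends it to $k-1$ and failure to $k+1$. This is a simple birth-death chain with birth probability $\epsilon$ and death probability $1-\epsilon$, which is positive recurrent precisely when $\epsilon<\tfrac{1}{2}$ (the hypothesis of the lemma), guaranteeing a unique stationary distribution $\pi$.

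Then I would solve detailed balance $\pi_k\,\epsilon=\pi_{k+1}(1-\epsilon)$ to obtain the geometric law $\pi_k=(1-\alpha)\alpha^{k-1}$ with $\alpha=\epsilon/(1-\epsilon)<1$, and compute
\begin{equation*}
\bar{Q}=\sum_{k=1}^{\infty}k\,\pi_k=\frac{1}{1-\alpha}=\frac{1-\epsilon}{1-2\epsilon}.
\end{equation*}
Since the buffer is infinite we have $\lambda_{\drop}=0$, so by~\eqref{equ:sys_mdl_d} the average latency equals $\bar{q}/\lambda=\bar{Q}$, and the algebraic identity $\tfrac{1-\epsilon}{1-2\epsilon}=1+\tfrac{\epsilon}{1-2\epsilon}$ yields the claimed formula.

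The only genuinely delicate point is the justification that the time-average of $q_t/\lambda$ coincides with the expectation under $\pi$; this follows from the ergodic theorem for positive-recurrent Markov chains once $\epsilon<\tfrac{1}{2}$ is used to ensure the negative drift $E[Q_{t+1}-Q_t\mid Q_t=k]=2\epsilon-1<0$ for $k\geq 2$. Everything else is a routine closed-form computation, so I do not anticipate a substantive obstacle.
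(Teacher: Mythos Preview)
Your proposal is correct and follows essentially the same approach as the paper: both identify the queue-length process as a Markov chain on the lattice $\{\lambda,2\lambda,\ldots\}$, solve for the geometric stationary distribution $\pi_{i\lambda}=(1-\alpha)\alpha^{i-1}$ with $\alpha=\epsilon/(1-\epsilon)$, and divide the resulting mean queue length by $\lambda$ via Little's Law. The only cosmetic differences are that you normalize by $\lambda$ up front and invoke detailed balance, whereas the paper writes the global balance equations directly; the computations and final algebra are identical.
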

\begin{proof}
The main idea is to compute the steady state distribution of the queue-length, which is a Markov chain with infinite countable states. Appendix~\ref{appendix:tx_rate} provides the complete proof.
\end{proof}

Lemma~\ref{lemma:tx_rate} provides a closed-form characterization of the transmission rate adaptation $\mu^{*}$ when the maximum buffer-length is infinite.
To provide insights on the proof of Lemma~\ref{lemma:tx_rate}, we consider the associated Markov chain of the buffer-length. The buffer-length state transition under any target error rate $\epsilon \in \left(0, 1\right)$, which is not necessarily equal to $\epsilon_{o}$, and the transmission rate adaptation $\mu^{*}$ is depicted in Fig.~\ref{fig:q_mu_l}.
By Little's Law, the overall latency equals to the ratio between the average queue-length and the arrival rate $\lambda$. Notice that $\lambda$ is the difference between the adjacent states in Fig.~\ref{fig:q_mu_l}. Hence, the average queue-length is in proportional with $\lambda$ (see Appendix~\ref{appendix:tx_rate} for a rigorous proof). As a result, the overall latency depends only on the target error rate $\epsilon$, but not on $\lambda$.

\begin{figure*}[ht]
\centering
\includegraphics[width=0.7\textwidth]{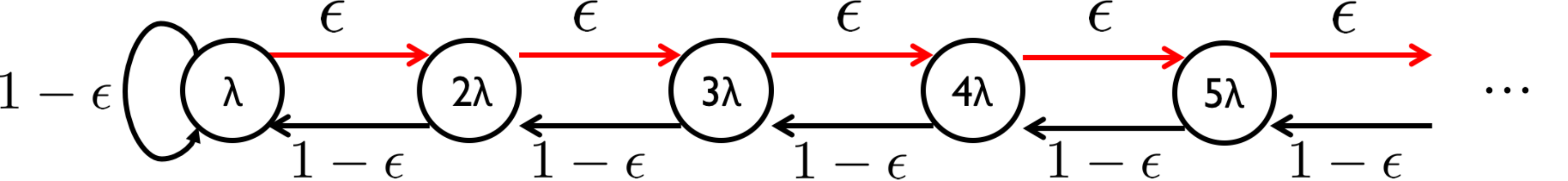}
\caption{Evolution of the queue-length $q_t$ under any target error rate $\epsilon \in \left(0, 1\right)$ and the transmission rate adaptation $\mu^{*}$ as a Markov chain.
If $\epsilon>0.5$, the average queue-length hence queueing latency is infinite.}
~\label{fig:q_mu_l}
\end{figure*}

To summarize, the transmission rate control policy $\mu^{*}$ applies a negative drift $-\lambda$ with probability $(1-2\epsilon)$ towards the minimum queue-length $\lambda$.
To minimize the latency as $M\to \infty$, the queue-length needs to be regulated towards the minimum queue-length $\lambda$. This regulation is achieved by selecting a smaller target error rate.

By using Lemma~\ref{lemma:tx_rate}, we have that the achieved latency of \LASC is
$D_{\mathrm{LYRRC}}\left(M\right) = 1 + \frac{\epsilon_{o}}{1-2\epsilon_{o}}$.
As mentioned above, the target error rate $\epsilon_{o}$ of \LASC\eqref{equ:eps_l_mu_l_def} reduces as the number of base-station antennas increases.
The achieved latency $D_{\mathrm{LYRRC}}$ reduces with more base-station antennas.
We now prove the asymptotic optimality of \LASCNoSpace.


\subsubsection{Asymptotic Optimality of \LASCNoSpace}
\begin{Theorem}[Optimal Large-Array Control] \label{thm:large_array_contrl}
For any $\rho \in \left[0,1\right)$ and positive $\tau$, as $M \to \infty$, \LASC\eqref{equ:eps_l_mu_l_def} guarantees that the overall latency is within a vanishing gap from optimal as
\begin{equation}
    D_{\mathrm{LYRRC}}\left(M\right) - D^{*}\left(M\right) \cong \left(\epsilon_{o}\right)^2, \ M\to\infty,
\end{equation}
where $D_{\mathrm{LYRRC}}\left(M\right)= 1 + \frac{\epsilon_{o}}{1-2\epsilon_{o}}$ is the overall latency by \LASCNoSpace, and $\epsilon_{o}$ is given by~\eqref{equ:eps_oo}.
\end{Theorem}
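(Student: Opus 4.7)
The plan is to sandwich the optimum latency $D^{*}(M)$ between the lower bound of Theorem~\ref{thm:large_array_lower_bd} and the exact latency achieved by LYRRC itself, then expand to leading order in the vanishing parameter $\epsilon_{o}$.

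First I would verify that LYRRC is admissible for all sufficiently large $M$. By~\eqref{equ:admissive_mean}--\eqref{equ:admissive_variance}, the per-antenna gain $\kappa_{n}$ has a mean bounded away from zero and variance $O(1/M)$; together with the definition of $\eta$ in~\eqref{equ:eta_def} and Chebyshev's inequality, this implies that the argument of $F_{\eta}$ inside~\eqref{equ:eps_oo} vanishes as $M\to\infty$, so $\epsilon_{o}\to 0$. In particular $\epsilon_{o}<1/2$ for all $M$ sufficiently large, and the hypothesis of Lemma~\ref{lemma:tx_rate} is met. Applying that lemma with target error rate $\epsilon=\epsilon_{o}$ and the thresholding rate policy $r_{t}(q_{t})=\min(q_{t},2\lambda)$ gives the exact expression
\begin{equation*}
D_{\mathrm{LYRRC}}(M) \;=\; 1 + \frac{\epsilon_{o}}{1-2\epsilon_{o}}.
\end{equation*}

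Next, Theorem~\ref{thm:large_array_lower_bd} supplies the lower bound $D^{*}(M)\geq 1+\epsilon_{o}/(1-\epsilon_{o})$, valid for every feasible policy. Subtracting this from the LYRRC expression yields
\begin{equation*}
D_{\mathrm{LYRRC}}(M)-D^{*}(M) \;\leq\; \frac{\epsilon_{o}}{1-2\epsilon_{o}}-\frac{\epsilon_{o}}{1-\epsilon_{o}} \;=\; \frac{\epsilon_{o}^{2}}{(1-\epsilon_{o})(1-2\epsilon_{o})}.
\end{equation*}
Since $\epsilon_{o}\to 0$, the denominator tends to $1$, so the right-hand side is asymptotically equivalent to $\epsilon_{o}^{2}$, which is the statement $D_{\mathrm{LYRRC}}(M)-D^{*}(M)\cong\epsilon_{o}^{2}$.

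The main (and only real) obstacle is upgrading this one-sided estimate to the tight asymptotic equivalence in $\cong$. The chain above only gives $D_{\mathrm{LYRRC}}(M)-D^{*}(M)\leq(1+o(1))\,\epsilon_{o}^{2}$, whereas $\cong$ requires the ratio of the gap to $\epsilon_{o}^{2}$ to converge to $1$, which in principle could fail if the gap were $o(\epsilon_{o}^{2})$. To rule this out I would sharpen Theorem~\ref{thm:large_array_lower_bd} to a matching second-order lower bound on the gap: its Jensen-inequality argument shows that any target error rate strictly below $\epsilon_{o}$ violates stability, so every admissible policy must operate at error rate $\epsilon_{o}(1+o(1))$; once the target is pinned at this value, a perturbation analysis of the Bellman recursion~\eqref{equ:Bellman} about the light-traffic regime $\epsilon_{o}\to 0$ shows that the minimum achievable queueing contribution is $\epsilon_{o}/(1-\epsilon_{o})+o(\epsilon_{o}^{2})$, while the LYRRC policy contributes $\epsilon_{o}/(1-2\epsilon_{o})$. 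The difference of these two series is exactly $\epsilon_{o}^{2}+O(\epsilon_{o}^{3})$, which combined with the displayed upper bound yields the claimed asymptotic equality.
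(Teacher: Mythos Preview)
Your sandwich is the right skeleton and matches the paper's Step~1, but you have skipped the step that occupies most of the paper's argument: verifying that LYRRC is \emph{feasible} for~\eqref{equ:opt_form}, i.e., that it satisfies the long-term power constraint~\eqref{equ:power_constraint}. Your ``admissibility'' check only establishes $\epsilon_{o}<\tfrac12$ so that Lemma~\ref{lemma:tx_rate} applies. But without the power constraint holding, LYRRC is not a feasible policy, so $D_{\mathrm{LYRRC}}(M)\geq D^{*}(M)$ is unjustified and your displayed inequality $D_{\mathrm{LYRRC}}-D^{*}\leq \epsilon_{o}^{2}/[(1-\epsilon_{o})(1-2\epsilon_{o})]$ does not follow. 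The paper handles this by computing the stationary transmission-rate distribution under $\mu^{*}$ (rate $\lambda$ with probability $(1-2\epsilon_{o})/(1-\epsilon_{o})$, rate $2\lambda$ otherwise), evaluating the power mapping~\eqref{equ:p_per_frm} at these two rates, and then showing that the expected power is at most $P$; the crux is proving $\epsilon_{o}M^{\rho}\to 0$, which is where the Chebyshev bound on $F_{\eta}$ (using~\eqref{equ:admissive_mean}--\eqref{equ:admissive_variance}) is actually needed. You invoke Chebyshev only to get $\epsilon_{o}\to 0$, which is weaker than required.

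On your last paragraph: you are correct that the displayed chain only yields the one-sided estimate $D_{\mathrm{LYRRC}}-D^{*}\leq (1+o(1))\epsilon_{o}^{2}$, and that a literal reading of $\cong$ would demand a matching lower bound. The paper's own proof in Appendix~\ref{appendix:large_array_contrl} stops at exactly this one-sided inequality and declares it the ``characterization of the latency gap''; it does not supply the second-order lower bound you sketch. So your extra perturbation-of-Bellman step is not part of the intended argument, and the paper is in effect using $\cong$ loosely here to mean ``of order $\epsilon_{o}^{2}$'' rather than exact asymptotic equivalence.
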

\begin{proof}
We first characterize the gap between latency under \LASC and minimum latency by combining Lemma~\ref{lemma:tx_rate} and Theorem~\ref{thm:large_array_lower_bd}. The proof is complete by using the large deviation theory to show that the power constraint is satisfied. Please see Appendix~\ref{appendix:large_array_contrl} for details.
\end{proof}

Recall that $f_{1}\left(M\right) \cong f_2\left(M\right)$ denotes that $\lim_{M \to \infty} \frac{f_1\left(M\right)}{f_2\left(M\right)} = 1$.
Theorem~\ref{thm:large_array_contrl} establishes the asymptotic optimality of \LASCNoSpace.
In addition, the latency gap between the lower bound and \LASC increases as the channel estimation error increases ($\tau$ reduces).
Furthermore, Lemma~\ref{lemma:tx_rate} and Theorem~\ref{thm:large_array_contrl} suggest that the latency-optimal target error rate increases for systems with fewer base-station antennas.
Hence, the reliability and low-latency design objectives of 5G URLLC does not always matches with each other for practical massive MIMO system with finite $M$.
Finally, we note that \LASC can achieve optimal-latency for any $\rho \in \left[0,1\right)$, which seems to contradict the transmission rate of $\min\left(q_{t}, 2\lambda\right)$. This can be explained by the fact that we are considering a wireless link with power adaptation and the probability of transmit at $2\lambda$ reduces as $M\to\infty$.
Therefore, using larger transmission power (over a few frames) can increase the peak transmission rate beyond the long-term average rate.
We next combine Theorem~\ref{thm:large_array_contrl} and Theorem~\ref{thm:large_array_lower_bd} to characterize the scaling of the array-latency curve $D^{*}\left(M\right)$ in closed-form.

\begin{Theorem}[Large-Array Latency Scaling]
  As $M\to \infty$, for any positive $\tau$ and $\rho\in\left[0,1\right)$, the optimum latency converges to $1$ frame as
  \begin{equation}
      D^{*}\left(M\right) - 1\cong \epsilon_{o} , \ M \to \infty
  \end{equation}
  where $F_{\eta}\left(\cdot\right)$ is
  the CDF function of the effective channel gain $\eta$, and $\epsilon_{o}$ is given by~\eqref{equ:eps_oo}.
  ~\label{thm:large_array_scaling}
\end{Theorem}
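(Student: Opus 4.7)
The plan is to obtain the claim by sandwiching $D^{*}(M)-1$ between two quantities that both behave like $\epsilon_{o}$ to leading order, using Theorem~\ref{thm:large_array_lower_bd} for the lower side and Theorem~\ref{thm:large_array_contrl} (via \LASCNoSpace) for the upper side, and then invoking the fact that $\epsilon_{o}\to 0$ as $M\to\infty$.

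First I would record the two bounds already at our disposal. Theorem~\ref{thm:large_array_lower_bd} gives
\begin{equation*}
D^{*}\left(M\right) - 1 \;\geq\; \frac{\epsilon_{o}}{1-\epsilon_{o}},
\end{equation*}
while Theorem~\ref{thm:large_array_contrl} together with Lemma~\ref{lemma:tx_rate} yields
\begin{equation*}
D^{*}\left(M\right) \;\leq\; D_{\mathrm{LYRRC}}\left(M\right) \;=\; 1 + \frac{\epsilon_{o}}{1-2\epsilon_{o}},
\end{equation*}
so that
\begin{equation*}
\frac{\epsilon_{o}}{1-\epsilon_{o}} \;\leq\; D^{*}\left(M\right) - 1 \;\leq\; \frac{\epsilon_{o}}{1-2\epsilon_{o}}.
\end{equation*}

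Next I would argue that $\epsilon_{o}\to 0$ as $M\to\infty$, which is exactly the channel-hardening observation already made right after Theorem~\ref{thm:large_array_lower_bd}: by~\eqref{equ:admissive_mean}--\eqref{equ:admissive_variance} and Chebyshev's inequality applied inside the product that defines $\eta$ in~\eqref{equ:eta_def}, the probability $F_{\eta}\!\left[M^{-(1-\rho)}(1/(\gamma P)+1/(\gamma p_{\tau}\tau))\right]$ tends to $0$ whenever $\rho<1$, because its argument shrinks like $M^{-(1-\rho)}\to 0$ while $\eta$ concentrates around the positive constant $\tau p_{\tau}\gamma/(\tau p_{\tau}\gamma+1)$. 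Once this is in hand, both bracketing terms admit the Taylor expansion $\epsilon_{o}/(1-c\epsilon_{o}) = \epsilon_{o}+O(\epsilon_{o}^{2})$ for $c\in\{1,2\}$, so dividing through by $\epsilon_{o}$ gives
\begin{equation*}
1+O(\epsilon_{o}) \;\leq\; \frac{D^{*}(M)-1}{\epsilon_{o}} \;\leq\; 1+O(\epsilon_{o}),
\end{equation*}
and taking $M\to\infty$ forces the ratio to $1$, which is precisely the statement $D^{*}(M)-1\cong \epsilon_{o}$.

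I do not expect any serious obstacle: the substance of the result is already contained in Theorems~\ref{thm:large_array_lower_bd} and~\ref{thm:large_array_contrl}, and what remains is only the asymptotic arithmetic. The only point that needs a sentence of care is checking that $\epsilon_{o}\to 0$, which is where one must use that $\rho<1$ strictly (so $M^{-(1-\rho)}$ vanishes) and that $\tau$ is fixed and positive (so the channel estimation error does not blow up and $\eta$ concentrates). With that in place, the Taylor expansions of $\epsilon_{o}/(1-\epsilon_{o})$ and $\epsilon_{o}/(1-2\epsilon_{o})$ around $\epsilon_{o}=0$ make both sides of the sandwich asymptotically equivalent to $\epsilon_{o}$, completing the proof.
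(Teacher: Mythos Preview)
Your proposal is correct and follows essentially the same approach as the paper: sandwich $D^{*}(M)-1$ between the lower bound $\epsilon_{o}/(1-\epsilon_{o})$ from Theorem~\ref{thm:large_array_lower_bd} and the upper bound $\epsilon_{o}/(1-2\epsilon_{o})$ from \LASC (Theorem~\ref{thm:large_array_contrl}/Lemma~\ref{lemma:tx_rate}), then use $\epsilon_{o}\to 0$ to conclude that the ratio of both bounds to $\epsilon_{o}$ tends to $1$. Your explicit justification that $\epsilon_{o}\to 0$ via channel hardening and the Taylor expansion of the bracketing terms is slightly more detailed than the paper's short proof, but the argument is the same.
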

\begin{proof}
Theorem~\ref{thm:large_array_lower_bd} provides a latency lower bound. The optimal joint control in Theorem~\ref{thm:large_array_contrl} serves as an achievability proof and provides an upper bound. The proof is complete by showing that the ratio of the upper bound and the lower bound converges to $1$ as $M\to \infty$.
\end{proof}

Theorem~\ref{thm:large_array_scaling} provides a closed-form characterization of the large-array latency.
In {\sl closed-form}, it describes the minimum latency $D^{*}$ as a function of the utilization factor $\rho$, the channel estimation error, and the number of base-station antennas $M$. As $M \to \infty$, $\epsilon_{o} \to 0$. Thus, both the retransmission and queueing latency converges to $0$ frame.
Finally, we comment on the impact of imperfect channel state information.
For any $\tau>0$, the latency convergence to the $1$ frame as $M\to\infty$.
For a practical system with finite $M$, more accurate channel leads to smaller latency.

\section{Multi-user Extension}\label{sec:MU}
In this section, we now consider the $K$-user latency minimization problem over the lossy channel.
In this section, suffix $\left[k\right], \ k= 1, 2, \cdots, K$ denotes the user index.
The long-term power constraint of User $k$ is $P\left[k\right]$.
The multiuser controller decides the target error rate $\epsilon\left[k\right]$ and the transmission rate $r_{t}\left[k\right]$ of User $k$. The buffer dynamic of each user is identical to that of the single user counterpart that is described in Section~\ref{subsec:su_buff_dyna}.

To minimize the system latency of the $K$ users at the same time, we associate positive weights~$\omega_{k}, \ k=1, \ldots, K$ to users. The multiuser latency minimization problem is then
\begin{equation}
\begin{split}
\min_{
\substack{
\epsilon\left[k\right],\ r_{t}\left[k\right] \\ \forall k}
}  & \quad
      \sum_{k=1}^{K} \omega_{k} D\left[k\right]\\
\text{s.t.}   & \quad E\left[\lim_{T\to \infty}\frac{1}{T} \sum_{t=1}^{T} p_{t}\left[k\right] \leq  P\left[k\right]\right], \ \forall k, \\
&\quad \epsilon\left[k\right] = \Prob\left[\sum^{N}_{n}\log\left(\sinr_{t,n}\left[k\right]\right) \leq r_{t}\left[k\right] L\right], \ \forall k,\\
& \quad  \epsilon\left[k\right] \leq  \epsilon_{\max}\left[k\right], \ \forall k,
\end{split}~\label{equ:mu_coupled}
\end{equation}
where $\epsilon_{\max}\left[k\right]$ is the maximum allowed target error rate (minimum reliability) of User $k$.
And $\sinr_{t,n}\left[k\right]$ is the receiver $\sinr$ of the $n$-th subcarrier in Frame $t$ for User $k$. Here, the buffer length $q_{t}\left[k\right]$ and buffer overflow $b_{t}\left[k\right]$ of User $k$ is given by~\eqref{equ:buff_evlo} and~\eqref{equ:overflow}, respectively.

To detect signals from the $K$ users, the base-station applies receive beamforming.
Let matrix $\mathbf{H}_{n} \in \mathbb{C}^{M \times K}$ denotes the uplink small-scale channel fading between the $M$-antenna base-station and the $K$ users.
Throughout this section, we consider user channels follow i.i.d. Rayleigh fading.
Finally, the base-station receives an inter-cell interference that is modeled by an additive white Gaussian noise of power~$p_{I}$, which is independent of the estimated channel.

Let the estimated channel and estimation error be $\hat{\mathbf{H}}_n$ and $\tilde{\mathbf{H}}_n$, respectively.
With the MMSE estimator, the estimation error between each base-station antenna and User $k$ is an complex Gaussian random variable with zero mean and variance of $\frac{1}{\tau p_{\tau}\left[k\right] \gamma\left[k\right]+1}$. Here, $\tau$ and $p_{\tau}\left[k\right]$ are the number of uplink pilots and the pilot power, respectively.
The base-station use the estimated channel to generate zero-forcing receive beamformers to detect the uplink signal of each user.
The receive beamforming matrix is $\mathbf{V}_{n} \stackrel{\Delta}{=} \left(\hat{\mathbf{H}}_{n}^{H}\hat{\mathbf{H}}_{n}\right)^{-1}\hat{\mathbf{H}}_{n}^{H}$.
On Subcarrier $n$, the received signal of User $k$ is~\cite{marzetta2016fundamentals,bjornson2017massive}
\begin{equation}
  \hat{x}_{k} =
  \sqrt{p\left[k\right]\gamma\left[k\right]} x_{k} +
  \left[\left(\mathbf{H}^{H}\mathbf{H}\right)^{-1}\hat{\mathbf{H}}_{n}^{H}\left(\mathbf{z} + \mathbf{z}_{I}-  \tilde{\mathbf{H}}\mathbf{x}
  \right)\right]_{K},
\end{equation}
where $\mathbf{z}$ and $\mathbf{z}_{I}$ are the receiver noise and inter-cell interference, respectively.
Similarly to past work~\cite{8529234,8640115} on retransmission, we compute the $\sinr$ by treating the interference as the worst case Gaussian noise.
And the effective SINR for User $k$ on Subcarrier $n$ is
\begin{align}
  &\sinr_{n} \left[k\right] \notag\\
  &= \frac{p_{k}\gamma_{k}}{\left(1+
p_{I} +
  \sum_{i=1}^{K}\frac{p\left[i\right]\gamma\left[i\right]}{\tau p_{\tau}\left[i\right]\gamma\left[i\right] + 1}\right) \left[\left(\hat{\mathbf{H}}_{n}^{H}\hat{\mathbf{H}}_{n}\right)^{-1}\right]_{kk}},
  \label{equ:new_mu_sinr}\end{align}
where $\left[\cdot\right]_{kk}$ denotes the $k$-th diagonal element of a matrix.
A crucial property of the $\sinr_{n}$ term~\eqref{equ:new_mu_sinr} is that the randomness of both the channel variation and the interference is concisely described by the inverse of the estimated channel, which is a random matrix.

For a practical uplink system where each user is unaware of other users' channel or queue information, the joint target error rate and transmission rate adaptation design appears intractable.
To see the difficulty of the joint policy design, we consider the following example.
For each user, the inter-beam interference in~\eqref{equ:new_mu_sinr} depends on {\sl other} users' large-scale fading and transmission power.
Recall that each user's transmission power changes in each frame based on its current queue-length.
Thus, it is extremely difficult for each user with only local knowledge (queue-length and large-scale fading) to infer the exact value of $\sum_{i=1}^{K}\frac{p\left[i\right]\gamma\left[i\right]}{\tau p_{\tau}\left[i\right]\gamma\left[i\right] + 1}$ and hence the proper transmission power. As a result, the target error rate and transmission rate policy cannot be designed distributedly by each user, which is undesirable for a practical uplink system.

Here, we proceed with the observation that, in real-world systems, the pilot power is usually required to be higher than the data signal power~\cite{3gpp.36.213}.
Hence, the $\sum_{i=1}^{K}\frac{p\left[i\right]\gamma\left[i\right]}{\tau p_{\tau}\left[i\right]\gamma\left[i\right] + 1}$ term is upper bounded by $\frac{K}{\tau}$, which can be viewed as a {\sl worst} cast interference penalty.
Each user then adjusts its power based on the $\sinr$ loss upper bound.
Substituting the $\sinr$ expression~\eqref{equ:new_mu_sinr} of the multiuser system into~\eqref{equ:p_to_epsilon}, we then have that the target error rate as
\begin{equation}
\epsilon \approx \Prob\left[
\left(\prod^{N}_{n=1}
\kappa_{n}
\right)^{1/N}
\leq \left(1+\frac{K}{\tau} + p_{I}\right) \frac{\exp\left(rL/N\right)}{M p \gamma}\right],
\label{equ:outage_MU}
\end{equation}
where the per-antenna gain $\kappa_{n}$ is
\begin{equation}
\kappa_{n}=\left\{ M\left[\left(\hat{\mathbf{H}}_{n}^{H}\hat{\mathbf{H}}_{n}\right)^{-1}\right]_{kk}\right\}^{-1}.~\label{equ:kappa_mu}
\end{equation}
Similarly to the single-user case, we also compute the per-frame transmission power as
\begin{equation}
p\left(r, \epsilon, \tau \right) = \left(1+\frac{K}{\tau} + p_{I}\right) \frac{\exp\left(rL/N\right)}{ F_{\eta}^{-1}\left(\epsilon\right)M \gamma},~\label{equ:MU_power_Map}
\end{equation}
where $\epsilon$ is the scheduled reliability target (target error rate) and $r$ is the transmission rate (in unit of packet).
Here, $\approx$ in~\eqref{equ:outage_MU} is because that each user considers the upper bound of inter-beam interference.

The per-antenna gain~\eqref{equ:kappa_mu} is independent of the large-scale channel, transmission power, and hence queue-length of the other $K-1$ users.
For each user, the distribution of the effective channel $\eta$ in~\eqref{equ:eta_def} then becomes independent of the channel, queue-length, and power of the other users.
Therefore, we can decouple the multiuser problem. By adopting a new distribution of the effective channel gain $\eta$ (generated by~\eqref{equ:kappa_mu}) and the new power mapping~\eqref{equ:MU_power_Map}, the multiuser problem is decoupled to $K$ independent single user problems~\eqref{equ:opt_form}. Each of the single-user problems can be solved by Algorithm~\ref{alg:dp}.
We now further demonstrate that the large-array analytical results in Section~\ref{sec:large_M} also apply to the considered multiuser systems.
\begin{Theorem}
For multiuser uplink systems, \LASC becomes
\begin{equation}
  \begin{cases}
    \epsilon^{*}\left[k\right] =  F_{\eta}\left[ \frac{1}{M^{1 - \rho\left[k\right]}} \left(1 + \frac{K}{\tau\left[k\right] } +  p_{I}\right) \frac{1}{\gamma P}\right] \\
    \mu^{*}\left[k\right]: r_{t}\left[k\right] = \min\left(q_{t}\left[k\right], \ 2 \lambda\left[k\right]  \right).
  \end{cases}~\label{equ:eps_l_mu_l_def_MU}
\end{equation}
As $M\to \infty$, for positive $\tau\left[k\right]$ and $\rho\left[k\right]\in\left[0,1\right)$, each user operates under \LASC achieves the minimum latency of
\begin{equation}
  D^{*}\left[k\right] - 1\cong \epsilon^{*} \left[k\right], \ k= 1, 2, \dots, K, \ M \to \infty.
\end{equation}
\label{theorem:mu_cha_admissive}
\end{Theorem}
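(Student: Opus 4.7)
The plan is to reduce the $K$-user problem to $K$ independent single-user instances of the problem already analyzed in Sections~\ref{sec:SysMdl}--\ref{sec:large_M}, and then simply invoke Theorems~\ref{thm:large_array_lower_bd} and~\ref{thm:large_array_contrl} on each instance. The decoupling is already done in the text preceding the statement: because each user upper-bounds the inter-beam interference term by the worst-case penalty $1+K/\tau+p_{I}$, the resulting power mapping~\eqref{equ:MU_power_Map} and the per-antenna gain $\kappa_{n}$ in~\eqref{equ:kappa_mu} depend only on User~$k$'s own large-scale gain, pilot power and estimated channel column. Consequently, under any admissible joint policy, the $k$-th user's buffer-and-rate Markov chain evolves independently of the other users' queues and channels, and the weighted multiuser objective in~\eqref{equ:mu_coupled} decomposes additively into $K$ terms, one per user.

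The first step is to make this decoupling precise: state it as a lemma that under the worst-case interference bound, the feasible set of~\eqref{equ:mu_coupled} is a product set, and each user's per-user latency can be minimized independently subject to its own power constraint $P[k]$ and reliability constraint $\epsilon_{\max}[k]$. Hence it suffices to prove the claim for a generic User $k$ with the modified power mapping~\eqref{equ:MU_power_Map}. The second step is to verify that the two ingredients Section~\ref{sec:large_M} relied on still hold in this setting: (i) the per-antenna gain $\kappa_{n}$ in~\eqref{equ:kappa_mu} has a mean that is independent of $M$ and a variance that is $O(1/M)$; this is the standard zero-forcing result for an i.i.d.\ Rayleigh channel with MMSE estimation, and the paper notes this explicitly below~\eqref{equ:admissive_variance}. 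Given this, the effective channel $\eta$ hardens, and Chebyshev's inequality and Jensen's inequality can be applied exactly as in Appendices~\ref{appendix:delay_gap} and~\ref{appendix:large_array_contrl}.

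Third, apply Theorem~\ref{thm:large_array_lower_bd} to User~$k$'s decoupled problem, but with~\eqref{equ:MU_power_Map} in place of~\eqref{equ:p_per_frm}. Repeating the Jensen-inequality argument of the single-user lower bound with the modified power map, the minimum feasible per-user target error rate under the long-term power budget $P$ is exactly
\[
\epsilon^{*}[k]=F_{\eta}\!\left[\frac{1}{M^{1-\rho[k]}}\Bigl(1+\frac{K}{\tau[k]}+p_{I}\Bigr)\frac{1}{\gamma P}\right],
\]
which is the first line of~\eqref{equ:eps_l_mu_l_def_MU}; substituting into~\eqref{equ:D_star_M} gives $D^{*}[k]-1\ge \epsilon^{*}[k]/(1-\epsilon^{*}[k])$. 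Fourth, for achievability, plug the thresholding rule $r_{t}[k]=\min(q_{t}[k],2\lambda[k])$ into Lemma~\ref{lemma:tx_rate}, which only uses the buffer dynamics and the target error rate, not the SINR form; this yields latency $1+\epsilon^{*}[k]/(1-2\epsilon^{*}[k])$. The long-term power constraint is met by repeating the large-deviation argument of Appendix~\ref{appendix:large_array_contrl}, since $\Pr[q_{t}[k]\ge 2\lambda[k]]\to 0$ as $M\to\infty$ and the modified power mapping still scales as $M^{\rho-1}$. Finally, compare the upper and lower bounds: their difference is $O((\epsilon^{*}[k])^{2})$ while both leading terms equal $\epsilon^{*}[k]$, so $D^{*}[k]-1\cong\epsilon^{*}[k]$ as $M\to\infty$.

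The main obstacle is not the asymptotic algebra, which is inherited almost verbatim from Section~\ref{sec:large_M}, but justifying that replacing the true interference term in~\eqref{equ:new_mu_sinr} by its worst-case upper bound $K/\tau+p_{I}$ does not loosen the \emph{lower} bound. The upper bound on achievable latency is immediate because any extra SINR margin only helps. For the matching lower bound, I would argue that in the large-array limit the interference term concentrates, so the gap between $\sum_{i}\tfrac{p[i]\gamma[i]}{\tau p_{\tau}[i]\gamma[i]+1}$ and its worst-case majorant is asymptotically dominated by $1$ (after normalization by $\gamma P M^{1-\rho[k]}$), making the two bounds agree to leading order. Everything else is a direct rerun of the single-user proofs with $\kappa_{n}$ redefined by~\eqref{equ:kappa_mu}.
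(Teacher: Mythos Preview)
Your plan matches the paper's proof almost exactly: reduce to $K$ decoupled single-user problems via the worst-case interference bound, then rerun Theorems~\ref{thm:large_array_lower_bd}--\ref{thm:large_array_contrl} and Lemma~\ref{lemma:tx_rate} with the redefined $\kappa_{n}$ from~\eqref{equ:kappa_mu}. The paper's Appendix~\ref{appd:wishart} carries out precisely your step~(i) by recognizing that $\kappa_{n}$ is (up to the constant $\tfrac{\tau p_{\tau}\gamma}{\tau p_{\tau}\gamma+1}$) the reciprocal of $M[\mathbf{W}^{-1}]_{kk}$ for a $K\times K$ complex Wishart $\mathbf{W}$ with $M$ degrees of freedom, and then uses the known first and second moments of the inverse Wishart to verify the mean and $O(1/M)$-variance conditions; you should be explicit about this rather than appealing to a ``standard zero-forcing result.''

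One clarification on your ``main obstacle'': the paper does not attempt to close the gap between the true interference $\sum_{i}\tfrac{p[i]\gamma[i]}{\tau p_{\tau}[i]\gamma[i]+1}$ and its majorant $K/\tau$. The quantity $D^{*}[k]$ in the theorem is the optimum of the \emph{decoupled} single-user problem obtained after the worst-case substitution (see the sentence just before the theorem), so both the lower and upper bounds live in the same surrogate model and no concentration argument is required. Your proposed refinement is a reasonable question about how tight the surrogate is relative to~\eqref{equ:mu_coupled}, but it is outside what Theorem~\ref{theorem:mu_cha_admissive} asserts.
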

\begin{proof}
With random matrix theory, we prove by adopting similar steps as in the single-user case. The key is step is to compute the mean and variance of~\eqref{equ:kappa_mu}.
Please find the proof in Appendix~\ref{appd:wishart}.
\end{proof}

Recall that $f_{1}\left(M\right) \cong f_2\left(M\right)$ denotes that $\lim_{M \to \infty} \frac{f_1\left(M\right)}{f_2\left(M\right)} = 1$.
\LASCNoSpace, therefore, indeed provides the latency-optimal target error rate and transmission rate policies to the multiuser massive MIMO system. And Theorem~\ref{theorem:mu_cha_admissive} also captures the minimum latency of each user.

In conclusion, for any non-negative weights $\omega_{k}$, we can convert the $K$ user optimization problem into $K$ parallel single user problems.
For finite $M$, Algorithm~\ref{alg:dp} solves each of the single user problems and provides the optimal target error rate and transmission rate policy.
Furthermore, each user operates using \LASC distributedly is asymptotically latency-optimal.

We end this section by discussing some possible extensions of the multiuser system analysis.

The first extension is the general multiuser MIMO systems with user correlation.
For massive MIMO, the user channels are expected to become mutually orthogonal as $M$ increases, which is usually referred to as ``favorable propagation''~\cite{marzetta2016fundamentals,bjornson2017massive}.
The favorable proportion is expected to hold in massive MIMO systems~\cite{marzetta2016fundamentals,bjornson2017massive} and is verified by recent massive MIMO measurements~\cite{7062910, zero_force_closed_form}.
However, for small scale multiuser systems, user channels might be significantly correlated, and the multiuser scheduling problem cannot be fully decoupled.
While spatial multiplexing correlated user leads to smaller $\mathrm{SINR}$, spatial multiplexing only non-correlated users can lead to longer queueing latency.
Hence, we expect a latency-minimizing scheduler should balance a tradeoff between longer queueing time and smaller $\mathrm{SINR}$.

The second extension is to model the pilot contamination and base-station array correlation, which both can reduce the $\sinr$.
The pilot contamination~\cite{marzetta2016fundamentals, bjornson2017massive} is caused by pilot reuse and leads to both non-coherent and coherent interference.
In particular, without proper pilot decontamination, coherent interference can grow linearly with the number of base-station antennas.
Recent research~\cite{8094949,bjornson2017massive} demonstrates that via multicell joint
transmission, the massive MIMO system can reject the coherent interference if the covariance matrix of pilot sharing users is asymptotically linearly independent.
Under the same condition,~\cite{8094949,bjornson2017massive} shows that the effective $\sinr$ can grow linearly with $M$ without bound with pilot contamination and base-station array correlation.
Therefore, it is reasonable to use a finite $p_{I}$ to model the power of the residual inter-cell interference after pilot decontamination.

Finally, we consider the latency-minimum transmission control of multicell systems with pilot contamination and base-station array correlation as an important future work.
Note that~\cite{8094949, bjornson2017massive} shows that the $\sinr$ can also grow linearly with $M$, which implies that the mean of the per-antenna gain would be lower bounded by a positive constant.
Computing the variance condition and finding the optimal transmission control for this generalized setup is beyond the scope of this paper.
To evaluate the impact of the spatial correlation, we utilize over-the-air measured channels in Section~\ref{sec:numerical}.

\section{Numerical Results}~\label{sec:numerical}
In this section, we utilize measured channels and simulated channels to confirm our previous analysis in Section~\ref{sec:algo} and Section~\ref{sec:MU}.
During the numerical evaluation, the latency duration is captured in the unit of second, which is obtained by multiplying frame duration to latency measured in the unit of frame.
We measure the over-the-air channels between mobile clients and a $64$-antenna massive MIMO base-station with Argos system~\cite{shepard2012argos} on the campus of Rice University. Figure~\ref{fig:array} and~\ref{fig:exp_setup} describes the Argos array and the over-the-air measurement setup.
We measured the $2.4$ GHz Wi-Fi channel ($20$ MHz, $52$ non-empty data subcarriers) for four pedestrian users in non-line-of-sight environments, which are denoted by Fig.~\ref{fig:exp-loc}. For each user, we take channel measurements over $7900$ frames of all subcarriers. The effective measured $\snr$ between each mobile user and each base-station antenna is higher than $15$ dB.
In simulations, we consider measured over-the-air channel traces as the perfect channel.

\begin{figure*}[htbp]
\centering
  \subfloat[Argos Array]{ \label{fig:array}
\includegraphics[height=0.18\textwidth]{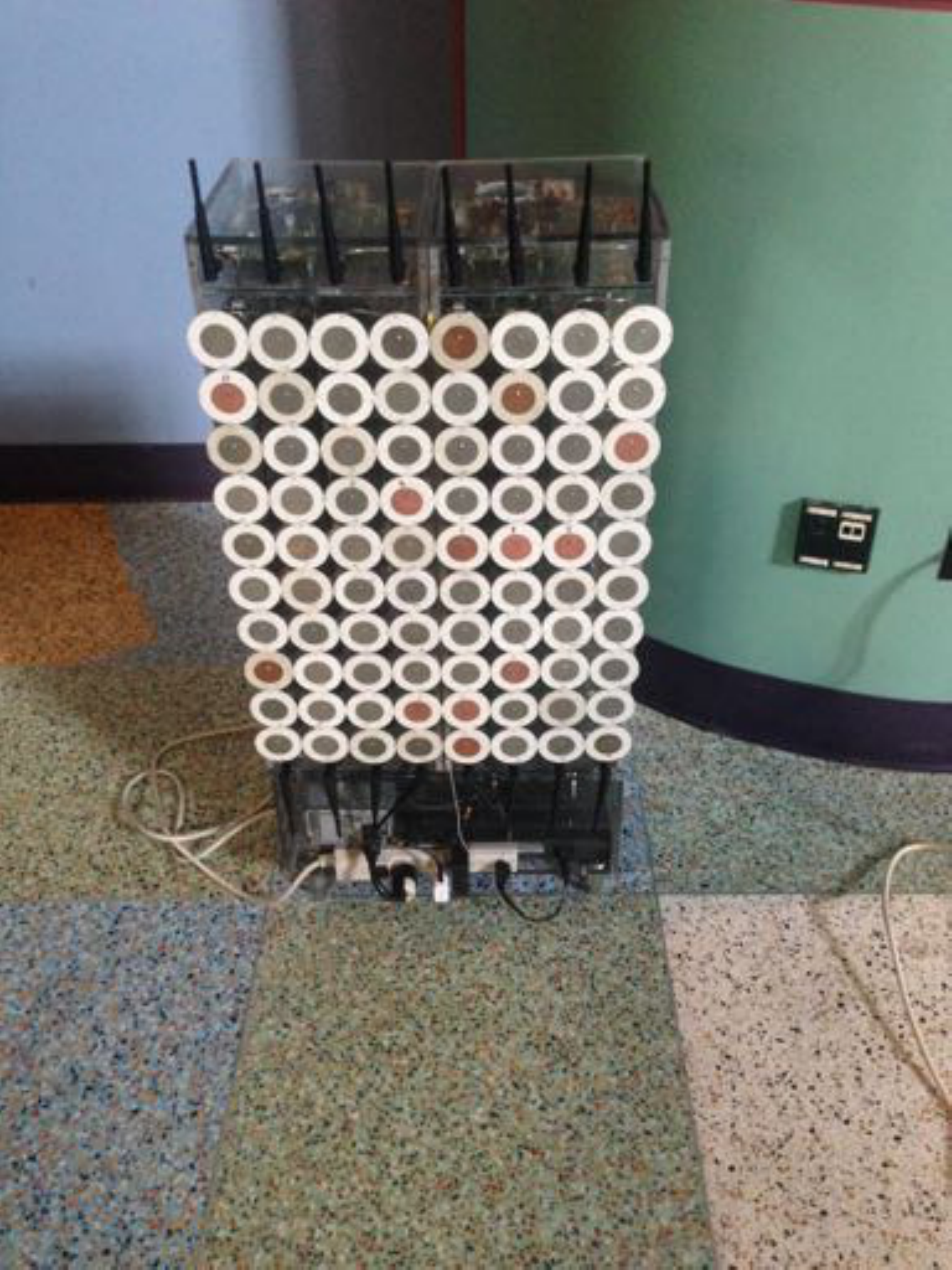}
  }
  \quad
  \subfloat[Over-the-air Measurement Setup]{ \label{fig:exp_setup}
\includegraphics[height=0.18\textwidth]{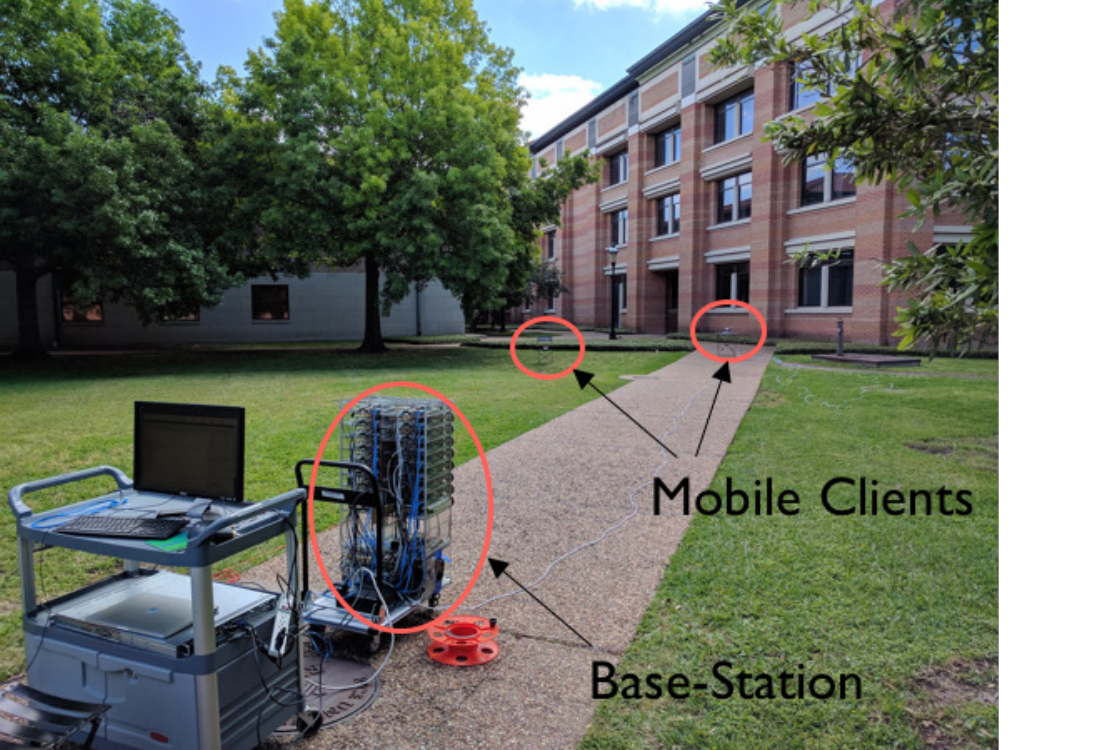}
  }
  \subfloat[User Locations]{ \label{fig:exp-loc}
  \includegraphics[height=0.18\textwidth]{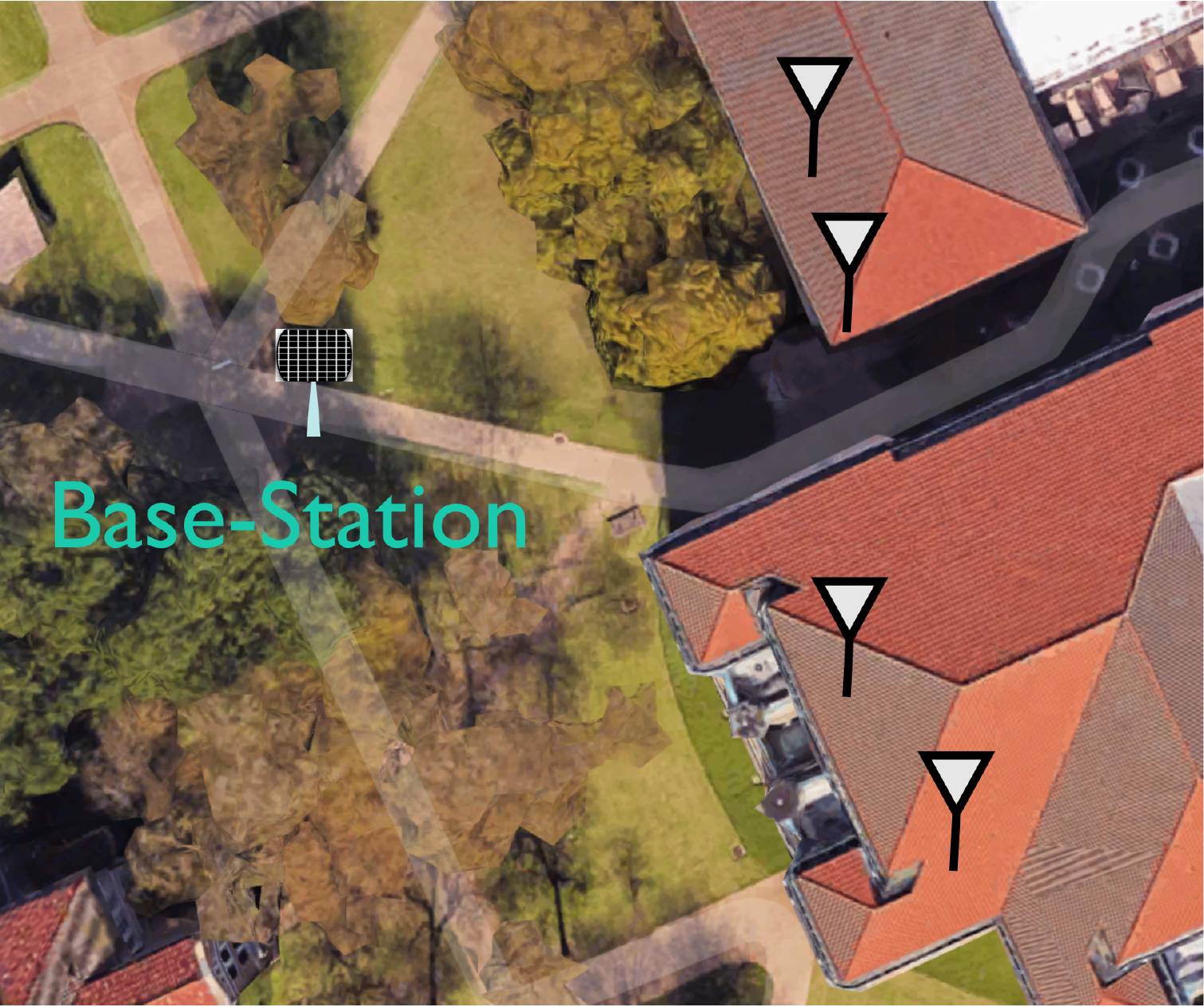}
    }
\caption{Argos~\cite{shepard2012argos} Massive MIMO base-station and the over-the-air measurements setup. The background map of Fig.~\ref{fig:exp-loc} is generated by Google Maps~\cite{google_map}.
The black single antennas denotes the locations of the mobile users.}
\end{figure*}

The base-station adopts MMSE estimator to estimate $\tau$ uplink pilots, each of power $20$ dBm, from the users.
Using the estimated channel, the base-station generates zero-forcing receive beamformers to decode the signal of each user.
The users are assumed to follow average power constraint of $20$ dBm with large-scale fading of $-10$ dB. The maximum buffer length $B$ is $10$.
The packet arrival rate is uniform over the time at the rate of $5$ packets per frame.
And the packet size $L$ is $52$ bits per OFDM symbol.
The latency penalty of dropped packets from buffer overflow is $0.5$ s.
And each self-contained frame is considered of duration $0.25$ ms.
The state space of the target error rate is $[1\%, \ 2\%, \ \dots, \ 20\%]$, $[0.1\%, \ 0.2\%, \ \dots, \ 0.9\%]$, and $[0.01\%, \ 0.02\%, \ \dots, \ 0.09\%]$.
Each user is under a maximum target error rate constraint of $3.16$\%, which is equivalent to the 5G URLLC reliability constraint of $99.9999$\% (over $1$ ms).
And the power of the inter-cell interference equals the receiver noise floor.

\begin{figure*}[htbp]
    \centering
  \subfloat[$\tau = 2$, Measured Channel\label{1a}]{%
       \includegraphics[width=0.42\linewidth]{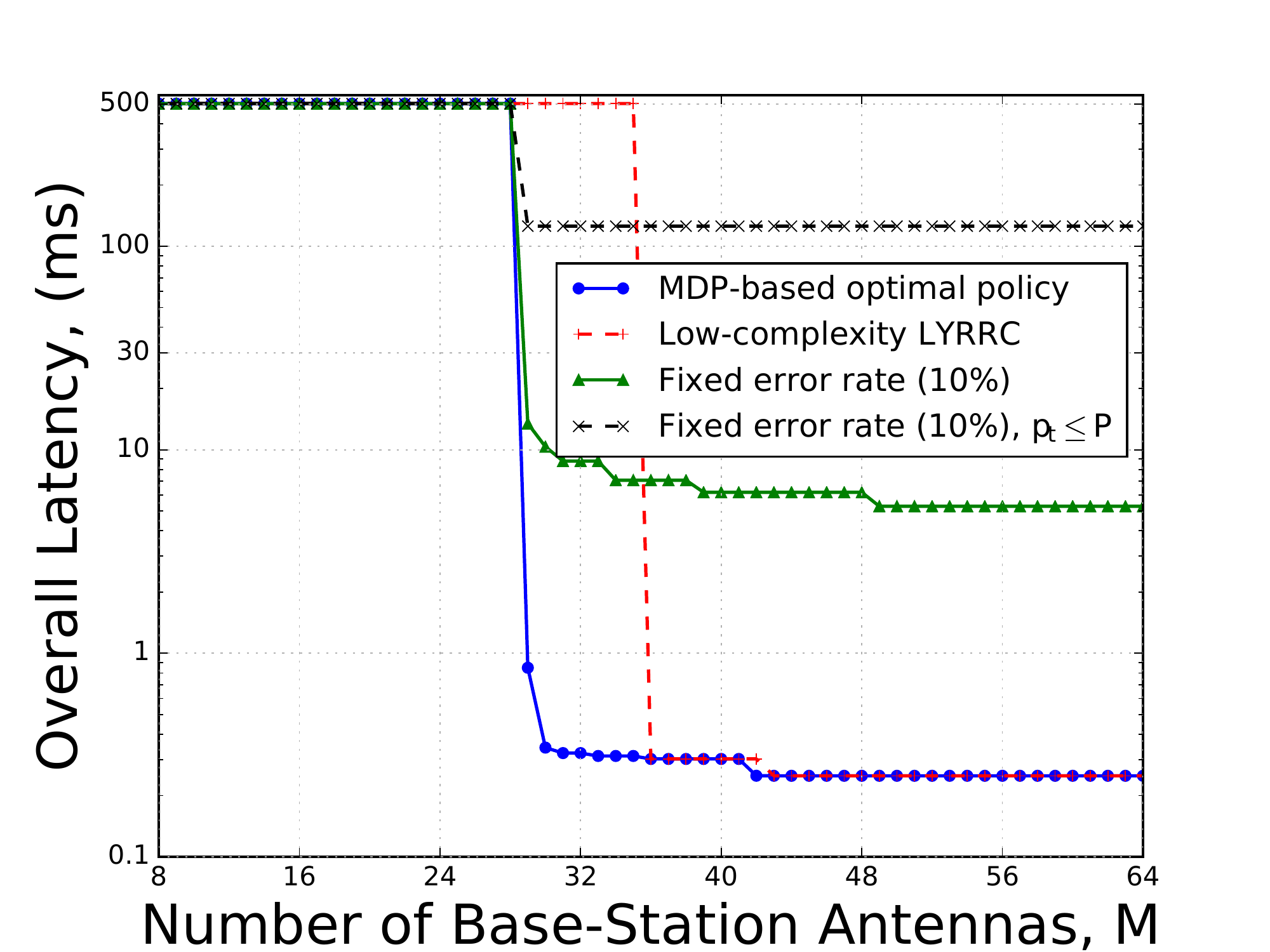}}
    \quad
  \subfloat[$\tau = 4$, Measured Channel\label{1b}]{%
        \includegraphics[width=0.42\linewidth]{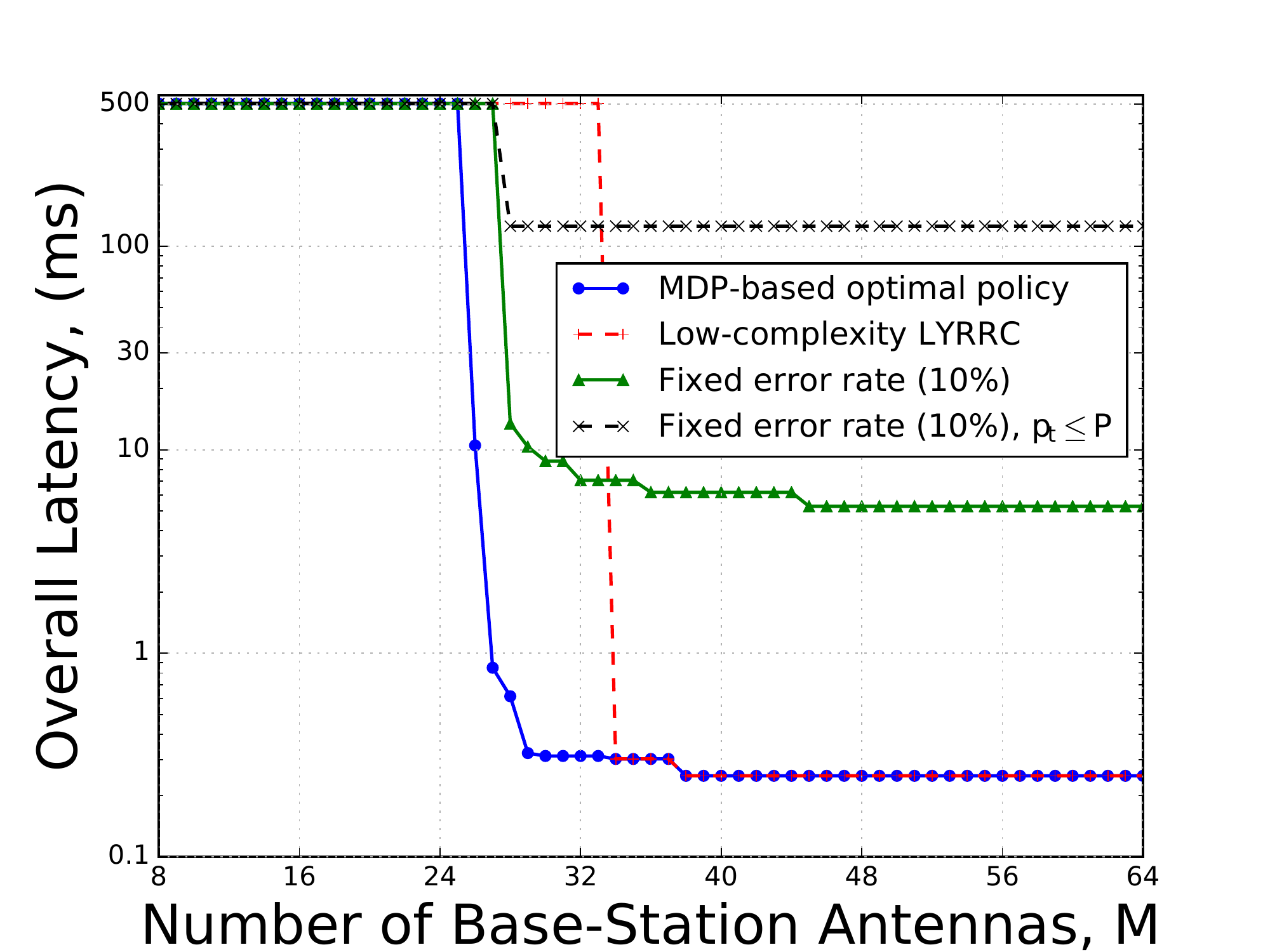}}
    \\
  \subfloat[$\tau = 2$, i.i.d. Rayleigh Fading\label{1c}]{%
        \includegraphics[width=0.42\linewidth]{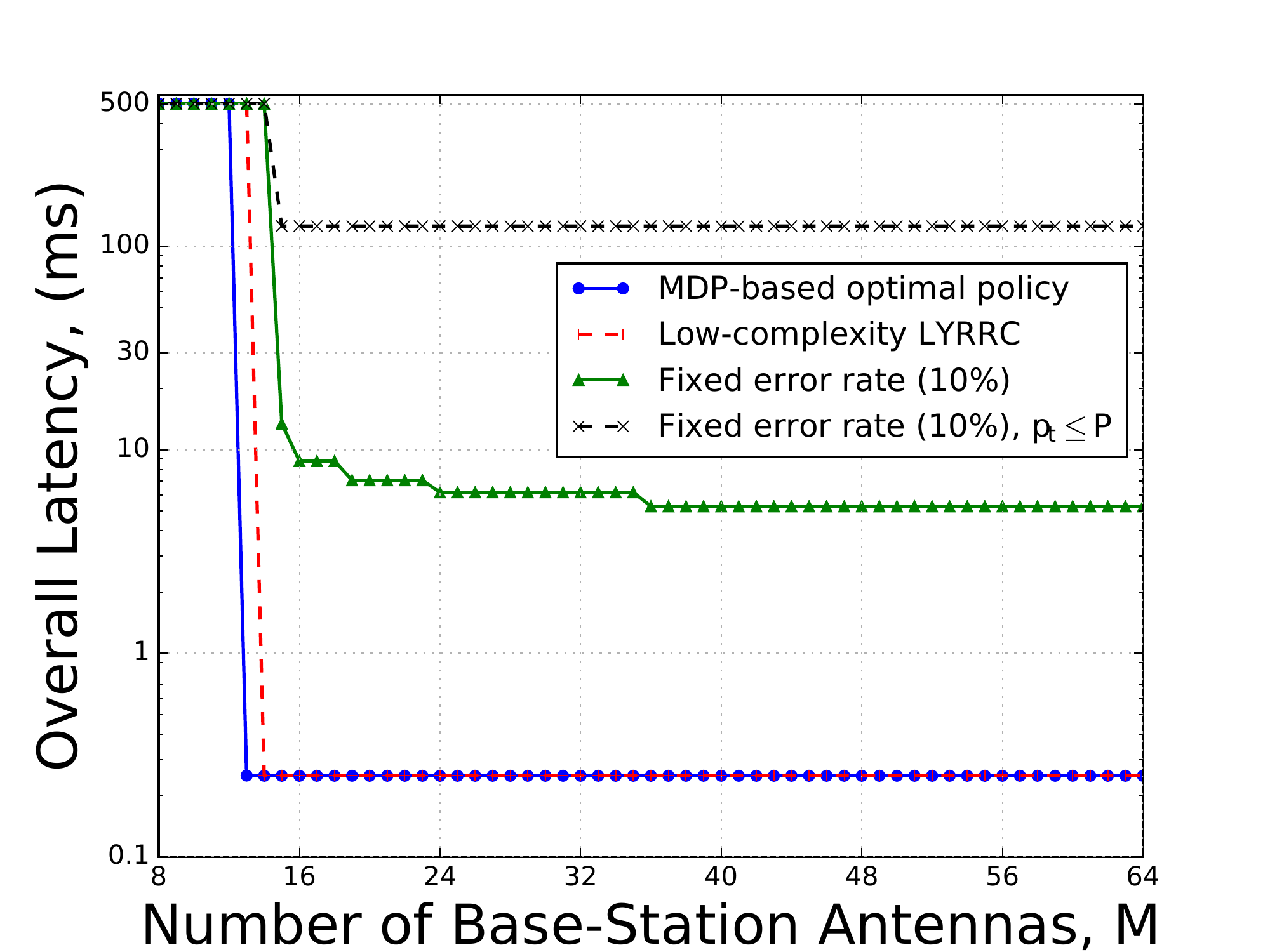}}
    \quad
  \subfloat[$\tau = 4$, i.i.d. Rayleigh Fading\label{1d}]{%
        \includegraphics[width=0.42\linewidth]{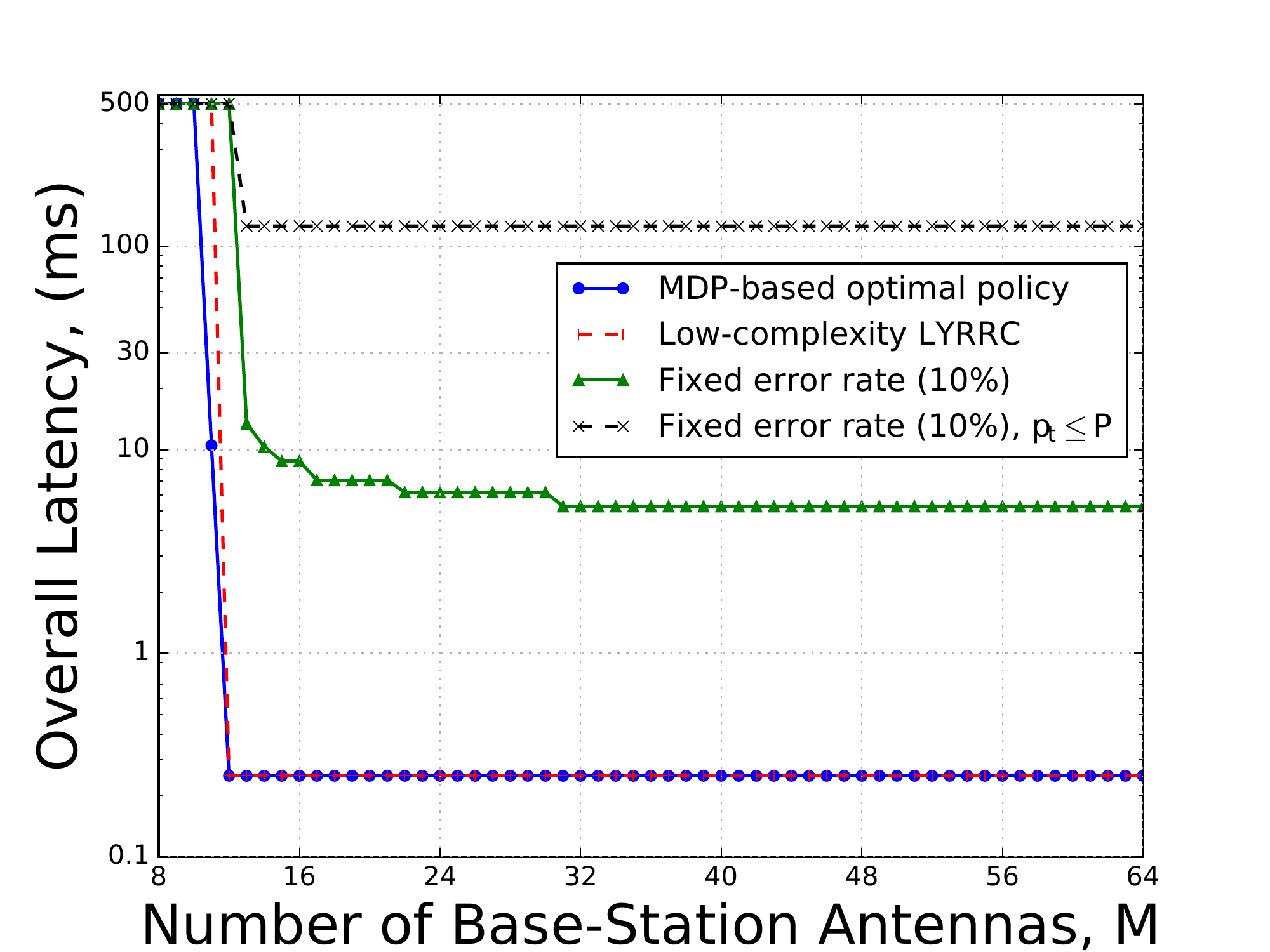}}
\caption{
The solved latency under four different policies over measured and simulated $4$-user channels. Algorithm~\ref{alg:dp} generated policy and \LASC\eqref{equ:eps_l_mu_l_def} are labeled by blue and red,respectively. Green lines is the policy of fixed reliability (target error rate) and power adaptation based on queue-length. The peak power constrained policy with fixed reliability is in black. The traffic arrival rate is $5$ packets per frame, each of size $52$-bits per OFDM symbol. The pilots power is $20$ dBm. The average power constraint is $20$ dBm with large-scale fading of $-10$ dBm.}~\label{fig:delay}
\end{figure*}

Fig.~\ref{fig:delay} provides the latency performance comparison of four different policies over the measured channels and simulated i.i.d. Rayleigh fading channels.
The blue lines are the optimal array-latency curves under the proposed joint reliability and transmission rate adaptation, which is obtained by Algorithm~\ref{alg:dp}.
The red lines are the proposed low-complexity \LASC\eqref{equ:eps_l_mu_l_def}, which was discussed in Section~\ref{sec:large_M}.
The green colored lines capture the latency under optimal transmission rate adaptation but fixed reliability (target error rate of $10\%$).
And the black lines are the latencies of fixed reliability ($10\%$ target error rate) and transmission rate adaptation under a peak power constraint, which is currently deployed in LTE and Wi-Fi systems.

Over measured and simulated channels, the proposed joint control (blue and red lines) clearly provides better latency performance than the two fixed-reliability counterparts.
Allowing target error rate to be adaptive on the number of antennas $M$ turned out to reduce the latency significantly.
Compared to the fixed target error rate with peak power control, a $400 \times$ latency reduction is observed when $M>30$.
Additionally, when $M$ is larger than $30$, we find that the proposed joint control can provide a $20 \times$ latency reduction compared to the state-of-the-art control that fixes target error rate and adapts transmission rate~\cite{berry2002communication, 4294166, 4567575, cao2008power} (based on the
number of antennas and queue length).
The large-array asymptotic latency-optimal control, \LASCNoSpace, turned out to be near latency-optimal when $M$ is larger than $30$.
Finally, we find policies that fixed target error rate at $10$\% leads to at least $5$ ms latency and cannot satisfy the URLLC latency requirement.

Fig.~\ref{fig:delay} captures the influence of imperfect channel state information on latency.
For a multiuser uplink system, the inter-beam interference~\eqref{equ:kappa_mu} reduces with the number of pilots $\tau$.
And achieving the same target error rate becomes more power expensive with larger inter-beam interference.
Therefore, over measured and simulated channels, the latency increases as $\tau$ reduces.

Fig.~\ref{fig:delay} also demonstrates that the spatial correlation of the base-station antennas reduces the minimum achievable latency.
With the same number of pilots $\tau$, a lower latency is observed in i.i.d. Rayleigh fading channels than that in measured channels.
The increased latency can be explained by the reduced system capacity from spatial correlation~\cite{marzetta2016fundamentals,bjornson2017massive}.
We further remark that \LASC achieves near optimal latency performance over both measured and simulated channels when $M>36$.

\begin{figure*}[htbp]
\centering
\subfloat[Latency-optimal Target Error Rate and Number of Base-Station Antennas $M$]{\label{fig:eps_vs_M}
\includegraphics[width=0.4\textwidth]{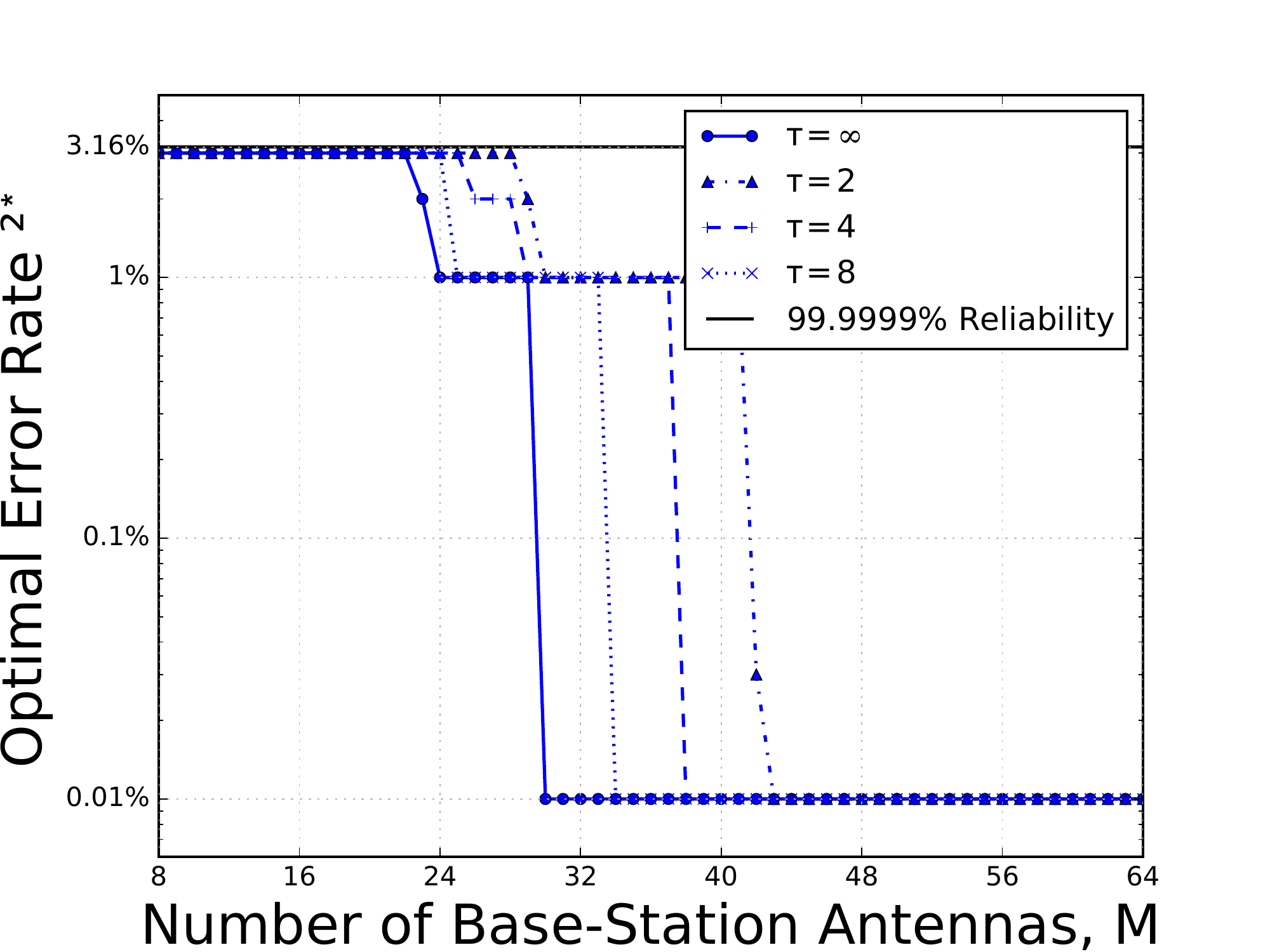}
}
\subfloat[Latency under $\mu^{l}$~\eqref{equ:eps_l_mu_l_def} with finite buffer length.]{\label{fig:mu_l_delay}
\includegraphics[width=0.4\textwidth]{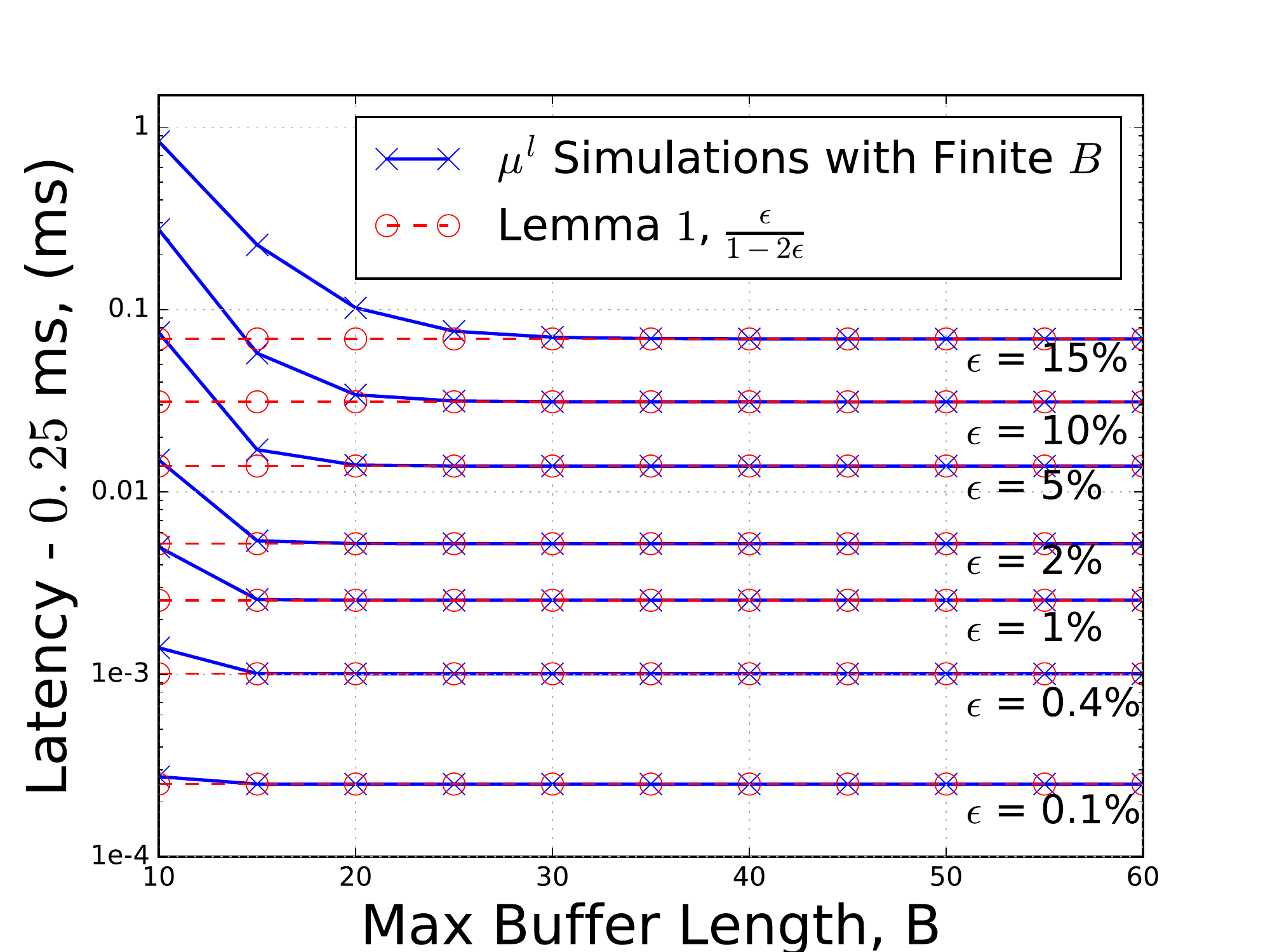}
}
\caption{Fig.~\ref{fig:eps_vs_M} shows the computed error rate that provides minimum latency in the measured channels.
And the resulted minimum latencies are shown in Fig.~\ref{fig:delay} (in blue).
Fig.~\ref{fig:mu_l_delay} verifies the latency characterization under ``rule of double'' in Lemma~\ref{lemma:tx_rate}.
}
\end{figure*}

We now comment on the optimal target error rate that minimizes the latency.
Fig.~\ref{fig:eps_vs_M} describes the latency-optimal target error rate obtained during solving the latency minimization problems in Fig.~\ref{fig:delay}.
The latency-optimal target error rate increases as $\tau$ reduces due to less accurate channel estimation, which agrees with \LASCNoSpace.
Additionally, due to the reliability constraint, the solved latency-optimal target error rates satisfy the 5G reliability requirement (target error rate of $3.16\%$).

Finally, we use simulations to verify our structural analysis in Section~\ref{sec:large_M}.
Fig.~\ref{fig:delay} confirms that \LASC\eqref{equ:eps_l_mu_l_def} is near latency-optimal for $M$ larger than a finite number of $38$.
One technical contribution independent of the massive MIMO system is a simple transmission rate adaptation $\mu^{l}$ as $\min\left(q, \ 2 \lambda  \right)$, which is referred to as ``rule of double'' and is part of~\LASCNoSpace.
Lemma~\ref{lemma:tx_rate} captures that, when buffer size $B\to \infty$, the resulted latency by using $\mu^{l}$ and a target error rate $\epsilon<0.5$ is $1 + \frac{\epsilon}{1-2\epsilon}$.
Fig.~\ref{fig:mu_l_delay} shows the resulted latency by using $\mu^{l}$ with a finite buffer size.
The (large-buffer) asymptotic latency turned out to accurately approximate the system latency when $B$ is larger than $30$.
And as the target reliability increases (target error rate reduces), buffer overflow is less likely to happen and the latency approximation in Lemma~\ref{lemma:tx_rate} becomes increasingly accurate.

\section{Conclusion}\label{sec:Conclude}
In this work, we study the latency-optimal cross-layer control over wideband massive MIMO channels.
By identifying a tradeoff between queueing and retransmission latency, we find that a lower physical layer target error rate does not always guarantee lower latency.
We present algorithms that generate the optimal target error rate and transmission rate policies.
We show that to achieve the minimum latency, the target error rate can no longer be considered fixed and needs to be adapted based on the number of base-station antennas, channel estimation accuracy, and the traffic arrival rate.
Our results also demonstrate that massive MIMO systems have the potential to achieve both high reliability and low latency and are a promising candidates of 5G URLLC.

\begin{appendices}
\section{Proof of Theorem~\ref{thm:large_array_lower_bd}}\label{appendix:delay_gap}
  We use a per packet argument. Since infinite buffer is assumed in this section, no packet is dropped and {\sl all} packets will be successfully received with a variable number of transmissions due to the potential channel-induced error. For any target error rate $\epsilon$, let $r$ be the average number of retransmissions. The sum of the retransmission latency and transmission time equals
  \begin{equation}
    1 + \sum_{r=0}^{\infty} \Prob\left(r\right) r = 1 + \sum_{r=0}^{\infty} r \left(1-\epsilon\right)\epsilon^{r} =1 + \frac{\epsilon}{1-\epsilon}, ~\label{equ:large_array_delay_gap_lower_bd}
  \end{equation}
  which is a lower bound of the total latency because the queueing latency is ignored.
  To finish the proof, we now lower bound $\epsilon$ under the long-term power constraint $P$.
  Under the steady state, the average transmission rate equals to the packet arrival rate, i.e.,
  \begin{equation}
   \lambda = E_{\pi}\left[ r \left(1 - \epsilon\right) \right] = E_{\pi}\left[r\right]\left(1 - \epsilon\right). ~\label{equ:d_lbd_A}
  \end{equation}
  The power function~\eqref{equ:p_per_frm} is convex on $r$. We can apply Jensen's inequality and~\eqref{equ:rho_def} to obtain a lower bound for the average transmission power as
\begin{align}
P &= E_{\pi}\left[ p\left(r, \epsilon, \gamma \right) \right] \notag\\&\geq
\left\{\frac{\gamma F_{\eta}^{-1}\left(\epsilon\right)}{
\exp\left[\left(\frac{\rho}{1-\epsilon}-1\right) \log M\right]
} - \frac{\gamma}{1 + \gamma p_{\tau} \tau } \right\}^{-1},~\label{equ:P_lower_appd}
  \end{align}
Function $F_{\eta}^{-1}$ is an inverse CDF and is non-decreasing.
From~\eqref{equ:P_lower_appd}, the $\epsilon$ is lower bounded as
\begin{equation*}
F_{\eta}^{-1}\left(\epsilon\right) \geq
M^{-\left[1-\rho/\left(1-\epsilon\right)\right]}
  \left(\frac{1}{\gamma P} + \frac{1}{\gamma p_{\tau} \tau}\right).
  \end{equation*}
Using the monotonicity of the CDF, a lower bound on the target error rate $\epsilon$ is then
\begin{equation}
  \epsilon \geq F_{\eta}\left[ \frac{1}{M^{\left(1 - \rho\right)}} \left(\frac{1}{\gamma P} + \frac{1}{\gamma p_{\tau} \tau}\right)\right].
  ~\label{equ:d_lbd_eps_lbd}
  \end{equation}
We finish the proof by combining~\eqref{equ:d_lbd_eps_lbd} and~\eqref{equ:large_array_delay_gap_lower_bd}.

\section{Proof of Lemma~\ref{lemma:tx_rate}}\label{appendix:tx_rate}
We compute the queueing latency by considering the steady state. Under transmission rate adaptation $\mu^{l}$, the buffer length process~\eqref{equ:buff_evlo} is rewritten as $q_{t+1} = \max\left[q_{t} +\left(1-2 \ 1_{t}\right)\lambda, \lambda\right].$
The buffer length process under $\mu^{l}$ thus constitutes a Markov chain with {\sl countably infinite states}~\cite{gallager2012discrete}.
The distribution of $1_{t}$ is determined by target error rate $\epsilon$ as $\Prob\left(1_{t}=1\right)=\epsilon$ and $\Prob\left(1_{t}=0\right)=1-\epsilon$.  The state transition is shown in Fig.~\ref{fig:q_mu_l}. Denote the steady state distribution of the buffer length as $\pi_{q}$. We then have that
\begin{equation}
\begin{cases}
\pi_{\lambda} &= \left(1-\epsilon\right) \pi_{\lambda} + \left(1-\epsilon\right) \pi_{2\lambda} \notag \\
\pi_{i\lambda} &= \epsilon \pi_{\left(i-1\right)\lambda}  + \left(1-\epsilon\right) \pi_{\left(i+1\right)\lambda} , \quad i \geq 2,
\end{cases} \notag
\end{equation}
where $\sum_{i=0}^{N} \pi_{i\lambda} = 1$. The steady state distribution is then computed as
\begin{equation}
\pi_{i\lambda} = \left(1-\frac{\epsilon}{1-\epsilon}\right) \left(\frac{\epsilon}{1-\epsilon}\right)^{i-1},\quad  i=1,2,\dots.~\label{equ:pi_q_mu_l}
\end{equation}
Using~\eqref{equ:pi_q_mu_l}, the average latency is then computed as
\begin{align*}
\frac{1}{\lambda}E_{\pi_{q}}\left[q\right]& = \frac{1}{\lambda}\left(\sum_{i=1}^{\infty} \pi_{i\lambda} i\lambda\right)
\\&=  \sum_{i=1}^{\infty} \left(\frac{\epsilon}{1-\epsilon}\right)^{i-1} i -  \sum_{i=1}^{\infty} \left(\frac{\epsilon}{1-\epsilon}\right)^{i} i
= 1 + \frac{\epsilon}{1-2\epsilon},
\end{align*}
which completes the proof.


\section{Proof of Theorem~\ref{thm:large_array_contrl}}\label{appendix:large_array_contrl}
We characterize the gap between latency under LYRRC as
\begin{align}
D_{\mathrm{LYRRC}} - D^{*}\left(M\right) &= \left(D_{\mathrm{LYRRC}} - 1\right) - \left(D^{*} -1\right)
\notag \\
&\leq \frac{\epsilon_{o}}{1-2\epsilon_{o}} - \frac{\epsilon_{o}}{1-\epsilon_{o}} = \frac{\left(\epsilon_{o}\right)^2}{\left(1 - 2\epsilon_{o}\right)\left(1-\epsilon_{o}\right)},~\label{equ:tmp3}
\end{align}
where the last step is obtained via applying Theorem~\ref{thm:large_array_lower_bd} and~\eqref{equ:d_lbd_eps_lbd}. Equ.~\eqref{equ:tmp3} provides the characterization of the latency gap.
To finish the proof, it is sufficient to show that the average power constraint $P$ is satisfied under the large-array simple control.

With utilization factor $\rho$~\eqref{equ:rho_def}, the packet arrival rate scales as $\lambda=\left(\rho N \log M\right)/L$.
Using the per-frame power~\eqref{equ:p_per_frm} and the definition of $\epsilon_{o}$~\eqref{equ:eps_l_mu_l_def}, the transmission power with rate $r$ is
\begin{equation}
  P^{\epsilon_{o}}\left(r\right)
  =  \left[
\left(\frac{1}{P} + \frac{\gamma}{\tau \gamma p_{\tau} + 1}\right) \frac{1}{M^{\rho\left(r/\lambda -1\right)}} - \frac{\gamma}{\tau \gamma p_{\tau} + 1} \right]^{-1}. ~\label{equ:p_eps_l}
\end{equation}
Since we assume empty buffer at time $0$ and constant arrival rate of $\lambda$, the transmission rates under policy $\mu^{l}$ is either $\lambda$ or $2\lambda$. Based on the queue length steady state characterization~\eqref{equ:pi_q_mu_l}, we have that $\Prob(u=\lambda) =\pi_{\lambda}= 1- \frac{\epsilon_{o}}{1-\epsilon_{o}}$
and $\Prob(r=2\lambda) =\sum_{i=2}^{\infty}\pi_{i\lambda}= \frac{\epsilon_{o}}{1-\epsilon_{o}}$. Conditioning on the rate expression in~\eqref{equ:p_eps_l}, the average power under \LASC is
\begin{align}
     P^{\epsilon_{o},\mu^{l}} &=
     \frac{1-2\epsilon_{o}}{1-\epsilon_{o}}P^{\epsilon_{o}}\left(\lambda\right) + \frac{\epsilon_{o}}{1-\epsilon_{o}} P^{\epsilon_{o}}\left(2\lambda\right)\notag \\&= \frac{1-2\epsilon_{o}}{1-\epsilon_{o}}P + \frac{\epsilon_{o}}{1-\epsilon_{o}} P^{\epsilon_{o}}\left(2\lambda\right).\label{equ:P_mu_l_eps_L}
\end{align}
We want to show that the power constraint is satisfied, i.e., $P^{\epsilon_{o},\mu^{l}}\leq P$.
Using~\eqref{equ:p_eps_l}, the second power consumption term of~\eqref{equ:P_mu_l_eps_L} is upper bounded as
\begin{equation}
\frac{\epsilon_{o}}{1-\epsilon_{o}}  P^{\epsilon_{o}}\left(2\lambda\right) \leq \epsilon_{o}P^{\epsilon_{o}}\left(2\lambda\right)  \leq \epsilon_{o} M^{\rho}.
\end{equation}
Therefore, the sufficient condition~\eqref{equ:P_mu_l_eps_L} is equivalent to
\begin{equation}
\lim_{M\to\infty}\epsilon_{o}\exp\left(\rho \log M\right) = \lim_{M\to\infty}\epsilon_{o}M^{\rho}= 0.~\label{equ:sufficient}
\end{equation}
Before proving~\eqref{equ:sufficient}, we first present an upper bound of $\epsilon_{o}$.
The effective channel gain $\eta$~\eqref{equ:eta_def} is the average of $N$ i.i.d. random variables $\log \kappa$. For $x<0$, we thus have an upper bound as
\begin{align}
F_{\eta}\left(x\right)& = F_{\sum_{n=1}^{N}\log\kappa_{n}}\left(Nx\right) \leq F_{\log\kappa}\left(Nx\right)
\notag\\&= Pr\left(\kappa \leq \exp\left(Nx\right)\right),~\label{equ:F_up_bd_1}
\end{align}
where the last step is by the definition of CDF. We now upper-bound~\eqref{equ:F_up_bd_1} as the follows.
\begin{align*}
F_{\eta}\left(x\right) &\leq Pr\left(\kappa - E\left[\kappa\right] \leq  \exp\left(Nx\right) - E\left[\kappa\right]\right)
\\
&\leq Pr\left(|\kappa - E\left[\kappa\right] | \geq  E\left[\kappa\right] - \exp\left(Nx\right)\right).
\end{align*}
Here, the last term denotes the probability that $\kappa$ has a larger deviation (to its mean) than
$E\left[\kappa\right] - \exp\left(Nx\right)$. Using Chebyshev's Inequality, a new upper-bound is obtained as
\begin{align}
F_{\eta}\left(x\right) &\leq \frac{\Var\left[\kappa\right]}{(E\left[\kappa\right] - \exp\left(Nx\right))^2}
\notag\\&=  \frac{1}{M} \frac{1}{(\frac{\tau p_{\tau} \gamma}{1 + \tau p_{\tau} \gamma} - \exp\left(Nx\right))^2} \left(\frac{\tau p_{\tau} \gamma}{1 + \tau p_{\tau} \gamma}\right)^2 = O\left(\frac{1}{M}\right),  \label{equ:final_cond}
\end{align}
where the last step is by conditions~\eqref{equ:admissive_mean} and~\eqref{equ:admissive_variance}.
By the definition of $\epsilon_{o}$, using the above upper bound proves~\eqref{equ:sufficient} and completes the proof.

\section{Proof of Theorem~\ref{theorem:mu_cha_admissive}}~\label{appd:wishart}
The multi-user mapping~\eqref{equ:MU_power_Map} can be viewed as a scaled version of~\eqref{equ:p_per_frm} when $\tau=\infty$.
Recall that the proof of Theorem~\ref{thm:large_array_lower_bd} and Lemma~\ref{lemma:tx_rate} is independent of the distribution of the per-antenna gain $\kappa_{n}$. To complete the proof, we only need to prove the mulituser version of Theorem~\ref{thm:large_array_contrl} by following the same derivations as in Appendix~\ref{appendix:large_array_contrl}.
As the proof from~\eqref{equ:tmp3} to~\eqref{equ:F_up_bd_1} is also independent to the distribution of $\kappa_{n}$, we finish the proof by proving that~\eqref{equ:kappa_mu} satisfies~\eqref{equ:final_cond}.
By the multiuser setup in Section~\ref{sec:MU}, $\kappa_{n}$~\eqref{equ:kappa_mu} equals
$
\frac{\tau p_{\tau}\left[k\right]\gamma\left[k\right]}{\tau p_{\tau}\left[k\right]\gamma\left[k\right] + 1} \frac{1}{M \left[\mathbf{W}^{-1}\right]_{kk}},
$
where $\mathbf{W}$ is a $K \times K$ central complex Wishart matrix with $M$ degrees of freedom and covariance matrix of $\mathbf{I}$. Since $\frac{\tau p_{\tau}\left[k\right]\gamma\left[k\right]}{\tau p_{\tau}\left[k\right]\gamma\left[k\right] + 1}$ is a positive constant, we only need to capture the mean and variance of  $\frac{1}{M \left[\mathbf{W}^{-1}\right]_{kk}}$to verify~\eqref{equ:final_cond}.
We first check the mean condition by Jensen's inequality as
$
  E\left[\frac{1}{M \left[\mathbf{W}^{-1}\right]_{kk}}\right] \geq
  \frac{1}{E\left[M \left[\mathbf{W}^{-1}\right]_{kk}\right]}\notag
$.
Using the first moments of inverse Wishart~\cite{tulino2004random} gives that
\begin{equation}
E\left[M \left[\mathbf{W}^{-1}\right]_{kk}\right] =
\frac{1}{K}E\left[M \tr\left(\mathbf{W}^{-1}\right)\right] = \frac{M}{M-K}.
\end{equation}
Therefore, the per-antenna gain is lower bounded by a constant as $M\to \infty$.
Recall that the $\kappa_{n}$ in systems with perfect channel case serves as an upper bound. In the upper bound case, the per-antenna gain expectation is $1$ as $M\to\infty$.
By random matrix theory~\cite{tulino2004random}, the variance of the trace of inverse Wishart satisfies
\begin{align}
  \Var\left[\tr\left(\mathbf{W}^{-1}\right)\right]&=E\left\{\left[\tr\left(\mathbf{W}^{-1}\right)\right]^2\right\} - E\left[\tr\left(\mathbf{W}^{-1}\right)\right]^2 \notag
  \\&=\frac{MK}{\left(\left(M-K\right)^2-1\right)\left(M-K\right)^2}.\notag
\end{align}
Using Taylor's expansion, we complete the proof by checking the variance as
\begin{align}
  \Var\left[\frac{1}{M\left[\mathbf{W}^{-1}\right]_{kk}}\right] &= \frac{\Var\left[M\tr\left(\mathbf{W}^{-1}\right)\right]_{kk}}{E\left[\frac{1}{M \left[\mathbf{W}^{-1}\right]_{kk}}\right]^4} + o\left(\frac{1}{M}\right) \notag
  \\&= O\left(\frac{1}{M}\right), \ M \to \infty. \notag
\end{align}
\end{appendices}

\bibliographystyle{IEEEtran}
\bibliography{IEEEabrv,latencyReTx}

\end{document}